\newcommand{\avec}{{\bf{a}}}
\newcommand{\cvec}{{\bf{c}}}
\newcommand{\evec}{{\bf{e}}}
\newcommand{\Yvec}{{\bf{Y}}}
\newcommand{\yvec}{{\bf{y}}}
\newcommand{\uvec}{{\bf{u}}}
\newcommand{\xvec}{{\bf{x}}}
\newcommand{\zvec}{{\bf{z}}}
\newcommand{\svec}{{\bf{s}}}
\newcommand{\vvec}{{\bf{v}}}
\newcommand{\onevec}{{\bf{1}}}
\newcommand{\zerovec}{{\bf{0}}}
\newcommand{\nuvec}{{\boldsymbol{\nu}}}
\newcommand{\Bmat}{{\bf{B}}}
\newcommand{\Hmat}{{\bf{H}}}
\newcommand{\Imat}{{\bf{I}}}
\newcommand{\Lmat}{{\bf{L}}}
\newcommand{\Rmat}{{\bf{R}}}
\newcommand{\Umat}{{\bf{U}}}
\newcommand{\Ymat}{{\bf{Y}}}
\newcommand{\define}{\stackrel{\triangle}{=}}
\def\bLambda{{\mbox{\boldmath $\Lambda$}}}
\def\thetavec{{\mbox{\boldmath $\theta$}}}
\newcommand{\be}{\begin{equation}}
\newcommand{\ee}{\end{equation}}
\newcommand{\beqna}{\begin{eqnarray}}
\newcommand{\eeqna}{\end{eqnarray}}
\acrodef{gsp}[GSP]{graph signal processing}
\acrodef{dsp}[DSP]{digital signal processing}
\acrodef{gnn}[GNN]{graph neural network}
\acrodef{glrt}[GLRT]{generalized likelihood ratio test}
\acrodef{gso}[GSO]{graph shift operator}
\acrodef{gft}[GFT]{Graph Fourier Transform}
\acrodef{igft}[IGFT]{inverse GFT}
\acrodef{gsp}[GSP]{graph signal processing}
\acrodef{ghpf}[GHPF]{graph high pass filter}
\acrodef{gfdi}[GFDI]{graph false data injection}
\acrodef{pgd}[PGD]{projected gradient descent}
\acrodef{bdd}[BDD]{bad data detection}
\acrodef{admm}[ADMM]{alternating direction method of multipliers}
\acrodef{psse}[PSSE]{Power system state estimation}
\acrodef{fdi}[FDI]{false data injection}
\acrodef{tv}[TV]{total variation}
\acrodef{dc}[DC-PF]{direct current power flow}
\acrodef{ac}[AC-PF]{alternating current power flow}
\acrodef{pmu}[PMU]{phasor measurement unit}
\acrodef{roc}[ROC]{receiver operating characteristics}
\acrodef{ems}[EMSs]{energy management systems}
\acrodef{dc}[DC]{Direct Current}
\acrodef{ac}[AC]{Alternating Current}
\acrodef{glpf}[GLPF]{graph low pass filter}
\newcommand{\Gcal}{\mathcal{G}}
\newcommand{\Vcal}{\mathcal{V}}
\newcommand{\Scal}{\mathcal{S}}
\newcommand{\Dcal}{\mathcal{D}}
\newcommand{\Rupm}{{\mathbb{R}^{M}}}
\newcommand{\upperRomannumeral}[1]{\uppercase\expandafter{\romannumeral#1}}
\newcommand{\Hnull} {\mathcal{H}_{0}}
\newcommand{\Ha} {\mathcal{H}_{1}}
\newcommand{\hbt}{\hat{\boldsymbol{\theta}}}
\newcommand{\RNum}[1]{\uppercase\expandafter{\romannumeral #1\relax}}
\newcommand{\tlvert}{\lvert\lvert}
\newcommand{\trvert}{\rvert\rvert}
\newtheoremstyle{ThDef}
{}                
{}                
{\slshape}        
{}                
{\bfseries}       
{.}               
{0.15cm }               
{}                
\theoremstyle{ThDef}
\newtheorem{thm}{Theorem}
\DeclareSymbolFont{symbols2}{LS1}{stixfrak} {m} {n}
\DeclareMathSymbol{\operp}{\mathbin}{symbols2}{"A8}
\begin{document}
	
	\captionsetup[figure]{labelfont={bf},labelformat={default},labelsep=period,name={Fig.}}

\title{Protection Against Graph-Based False Data Injection Attacks on Power Systems}
\author{Gal Morgenstern,~\IEEEmembership{Student Member,~IEEE}, 
Jip Kim, ~\IEEEmembership{Member,~IEEE},
James Anderson,~\IEEEmembership{Senior Member,~IEEE},\\
Gil Zussman,~\IEEEmembership{Fellow,~IEEE},
Tirza~Routtenberg,~ \IEEEmembership{Senior Member,~IEEE}
\thanks{G.\ Morgenstern and T.\ Routtenberg are with the School of Electrical and Computer Engineering,   Ben-Gurion University of the Negev,
Beer-Sheva 84105, Israel, Email: \{galmo,tirzar\}@post.bgu.ac.il. T.\ Routtenberg is also with the Department of Electrical and Computer Engineering, Princeton University, Princeton, NJ. J.\ Kim is with KENTECH, South Korea, Email: jipkim@kentech.ac.kr. J.\ Anderson and G.\ Zussman are with the Department of Electrical Engineering, Columbia University, New York, NY, Email: \{james.anderson,gil.zussman\}@columbia.edu.\\
© 20XX IEEE.  Personal use of this material is permitted.  Permission from IEEE must be obtained for all other uses, in any current or future media, including reprinting/republishing this material for advertising or promotional purposes, creating new collective works, for resale or redistribution to servers or lists, or reuse of any copyrighted component of this work in other works.	}}

	\maketitle

\begin{abstract}
 Graph signal processing (GSP) has emerged as a powerful tool for practical network applications, including power system monitoring. 
Recent research has focused on developing GSP-based methods for state estimation, attack detection, and topology identification using the representation of the power system voltages as smooth graph signals.
Within this framework, efficient methods have been developed for detecting false data injection (FDI) attacks, which until now were perceived as non-smooth with respect to the graph Laplacian matrix. 
 Consequently, these methods may not be effective against smooth FDI attacks. 
 In this paper, we 
propose a graph FDI (GFDI) attack that minimizes the Laplacian-based graph total variation (TV) under practical constraints.
We present the GFDI attack as the solution for a non-convex constrained optimization problem. 
The solution to the GFDI attack problem is obtained through approximating it using $\ell_1$ relaxation.  A series of quadratic programming problems that are classified as convex optimization problems are solved to obtain the final solution.  
We then propose a protection scheme that identifies the minimal set of measurements necessary to constrain the GFDI output to a high graph TV, thereby enabling its detection by existing GSP-based detectors.  Our numerical simulations on the IEEE-$57$  and IEEE-$118$  bus test cases reveal the potential threat posed by well-designed GSP-based FDI attacks. Moreover, we demonstrate that integrating the proposed protection design with GSP-based detection can lead to significant hardware cost savings compared to previous designs of protection methods against FDI attacks.
	\end{abstract}

	\begin{IEEEkeywords}
		Graph signal processing (GSP), sensor networks, power system state estimation (PSSE), false data injection (FDI) attacks, protective schemes
	\end{IEEEkeywords}
	\vspace{-0.3cm}
	\section{Introduction}
 
\ac{psse} is a crucial component of modern \ac{ems} that fulfills various purposes, including monitoring, analysis, security, control, and management of power delivery \cite{abur2004power}. 
 \ac{psse} is conducted using power measurements to estimate the voltages (states) at the system buses.
To ensure the reliability of the measurements, residual-based  \ac{bdd}
methods are integrated into the EMS \cite{abur2004power}. 
However, \ac{bdd} methods are not able to detect well-designed attacks, known as unobservable \ac{fdi} attacks \cite{liu2011false,liang2017review}, 
which can cause significant damage by misleading the \ac{psse} system\cite{xie2011integrity,jia2014impact,soltan2015analysis}.
These attacks are achieved by manipulating measurements based on the power network topology \cite{liu2011false},
where the topology matrix is either known or can be estimated from historical data \cite{kekatos2015online,kim2014subspace,grotas2019power,Halihal_Routtenberg_2022}.

Defending power systems against unobservable \ac{fdi} attacks involves two primary approaches. The first approach is to prevent attacks by protecting a subset of measurements using techniques such as encryption, continuous monitoring, and separation from the Internet \cite{kim2011strategic}.
This often involves identifying a minimal set of measurements required to prevent an adversary from constructing a feasible sparse \ac{fdi} attack \cite{bi2014graphical, deng2015defending, kim2011strategic, ansari2018graph,sou2019protection}. These works aim to ensure network observability and maintain the grid's immunity to well-coordinated attacks. Synchronized \ac{pmu} placement has also been suggested for the optimal deployment of protective measurements \cite{kim2011strategic, chen2006placement, kim2013phasor, sun2023asymptotic}. However, current methodologies do not consider recent developments of 
\ac{gsp}-based detectors against \ac{fdi}. 
The second main approach to protect power systems is to develop detection 
 methods against unobservable \ac{fdi} attacks that rely on system characteristics. 
 These methods include  
compressive sensing algorithms \cite{liu2014detecting, hao2015sparse, gao2016identification, morgenstern2022structural} that require certain structural properties for the system powers and a differential model with multi-time measurements. 
Another detection technique is the moving target defense \cite{ghaderi2020blended}, where the system configuration is actively changed.
Detection and identification methods based on machine learning, Kalman filters, and data mining have also been suggested \cite{esmalifalak2017detecting, wang2017novel, he2017real, almutairy2021accurate, wang2019online, kim2022identification, 6897944, chattopadhyay2019security}. 
However, these data-driven methods require a large set of historical and real-time power system data, which is usually unavailable. 

Designing unobservable \ac{fdi} attacks has also been investigated in the literature (see, e.g., \cite{liang2017review} and references therein). 
Several studies,  such as \cite{liu2011false, kim2011strategic, bobba2010detecting}, have examined the generation of valid unobservable FDI attacks with constraints on the adversary resources and access to the system sensors. 
Other researchers have focused on generating unobservable \ac{fdi} attacks where the adversary has incomplete knowledge of the power grid topology \cite{rahman2012false, liu2015modeling}. 
In some cases, designing an unobservable \ac{fdi} attack involves manipulating discrete data to reflect a false system topology \cite{kim2013topology}. 
Data-driven techniques such as partial component analysis (PCA) \cite{kim2014subspace, yu2015blind}, random matrix theory \cite{lakshminarayana2020data, kim2014subspace}, and learning \cite{tian2018data} have been used as well. 
However, these approaches do not consider the graphical representation of the power system, and, thus, may not fully leverage the benefits of \ac{gsp} techniques in detecting and mitigating \ac{fdi} attacks. 
Thus, incorporating graph-based techniques into the design of \ac{fdi} attack detection and mitigation methods can significantly enhance the resilience of power systems against cyber attacks.

GSP is a new and emerging field that extends concepts and techniques from traditional \ac{dsp} to data on graphs. 
GSP theory includes methods such as the \ac{gft}, graph filters \cite{sandryhaila2013discrete,ortega2018graph,shuman2013emerging}, and sampling and recovery of graph signals \cite{chen2015discrete,marques2015sampling,kroizer2022routtenberg}.
In recent years, tools from graph theory and \ac{gsp} have shown promise in the design of cyber-attack detection methods for power systems 
\hspace{-0.1cm} \cite{soltan2016power, bi2014graphical, 8372466,drayer2019detection,ramakrishna2021grid,Dabush_SAM_conf,dabush2023state,shereen2022detection}.
Specifically, theoretical analysis and experimental studies
have demonstrated 
that the system state vectors are low-pass graph signals \cite{drayer2019detection,ramakrishna2021grid, dabush2023state}. 
The works in \cite{drayer2019detection,ramakrishna2021grid,dabush2023state,shereen2022detection} leveraged this property to design  \ac{gsp}-based detectors that are able to detect unobservable \ac{fdi} attacks.
Since then, GSP and graph neural network (GNN) approaches have been widely adopted and applied in a variety of scenarios for FDI attack detection  (see, e.g., \cite{boyaci2021graph,jorjani2020graph,hasnat2022graph,haghshenas2023temporal,boyaci2022cyberattack,liu2023distributed}).
 The analysis and the detection design have been formulated for both \ac{dc} and \ac{ac} models, and with various types of measurements, such as \ac{pmu} data. 
However, the vulnerability of existing \ac{gsp}-based detection methods to  graph low-pass attacks has not been investigated,  but has only been mentioned as a topic for future research   \cite{shereen2022detection}.
Moreover, there is no practical method for generating an \ac{fdi} attack that exploits the graphical properties of the states.

\vspace{-0.2cm}
In this paper, we investigate the resilience of  \ac{psse} against \ac{fdi} attacks by leveraging the low graph \ac{tv} property of power system state variables. First, we introduce a novel type of unobservable \ac{fdi} attack, called the \ac{gfdi} attack, which is specifically designed to bypass \ac{gsp}-based detectors. Then, we propose a low-complexity solution to the non-convex \ac{gfdi} attack optimization problem,  which involves quadratic programming \cite{boyd2004convex}. 
Moreover, we propose a protection scheme that identifies the minimal set of secured sensors needed to prevent the damage of the unobservable \ac{gfdi} attack. We then present a practical greedy algorithm for the implementation of this scheme. Our simulation results demonstrate the vulnerability of existing \ac{gsp}-based detectors to the proposed \ac{gfdi} attack design.
Moreover, the simulations reveal that the proposed protection scheme significantly increases the graph \ac{tv} of the \ac{gfdi} attack, even when securing only a small portion of measurements, and thus makes the \ac{gfdi} attack detectable by the existing \ac{gsp}-based detectors.
Hence, the proposed protection scheme, when combined with a \ac{gsp}-detector, provides a cost-effective hybrid defense layer against \ac{fdi} attacks.

The remainder of this paper is organized as follows. 
In Section \ref{sec; background}, we introduce the necessary background on \ac{gsp}-based detection, \ac{psse}, and unobservable \ac{fdi} attacks. 
The \ac{gfdi} attack is introduced in Section \ref{sec; gfdi}. This attack is then used in Section \ref{sec; protection} to develop the \ac{gsp}-based protection scheme. Next, a simulation study is presented in Section \ref{sec; simulations}, and the conclusions appear in Section \ref{sec; conclusions}. 
	 
In this paper, vectors are denoted by boldface lowercase letters
and matrices are denoted by boldface uppercase letters.
The operators $\lvert\lvert \cdot \rvert\rvert$, $\lvert\lvert \cdot \rvert\rvert_0$,
and $\lvert\lvert \cdot \rvert\rvert_{\infty}$ denote the Euclidean norm, the zero semi-norm, and the max-norm, respectively. 
The operators  $(\cdot)^T$ and $(\cdot)^{-1}$ are the transpose and inverse operators, respectively. 
The notation $\Hmat^{\Scal}$ is the submatrix formed by the $\lvert \Scal \rvert$ rows of $\Hmat$ indicated by the indices in $\Scal$, where $\lvert \cdot \rvert$ is the cardinality of the input set. 
Finally, ${\text{diag}}(\avec)$ is the diagonal matrix whose $n$th diagonal entry is $a_n$.
 		
 \section{GSP-based \ac{fdi} attack detection }\label{sec; background}
 
 \subsection{GSP background} \label{sec; GSP pre}

We consider connected, undirected graphs, $\Gcal(\Vcal,\xi)$,   defined by a set of $N$ nodes, $\Vcal$, labeled $1,\hdots, N$, and a set of edges, $\xi$. 
Associated with each edge $(k,l)\in \Vcal \times \Vcal$ is a nonnegative weight denoted by $\omega_{k,l}$, 
unless there is no edge between nodes $k$ and $l$, and then, $\omega_{k,l}=0$. 
The graph nodes represent the entities of interest (e.g., users, items, sensors, etc.), and the edges define the interactions between them. 
These interactions are captured by the graph Laplacian matrix,
\be \label{eq; L}
\Lmat_{k,l}=\begin{cases}
\sum_{m\in\mathcal{N}_k} \omega_{k,m} & k=l \\
-\omega_{k,l} &  ~(k,l)\in \xi \\
0 & \text{otherwise},
\end{cases}
\ee
where $\mathcal{N}_k$ is the set of buses connected to bus $k$.

    	Given a graph $\Gcal(\Vcal,\xi)$, a graph signal is defined by the mapping:
		\be \label{eq; signal} 
		\svec:~\Vcal\rightarrow \mathbb{R}^N,
		\ee
		where each coordinate of the state variable in $\svec$ is assigned to one of the system nodes,
		i.e.,  $s_n$ denotes the signal value at node $n$.
	    Equivalently to in \ac{dsp} literature, the graph signal can also be represented in the (graph) spectral domain. 
		This representation is provided by the following
		 graph Laplacian eigendecomposition:
		\be \label{eq; eigen decomposition}
		\Lmat = \Umat \bLambda \Umat^T.
		\ee
		In \eqref{eq; eigen decomposition}, the diagonal matrix, $\bLambda={\text{diag}}(\lambda_1,\ldots,\lambda_n)$, contains the eigenvalues of 
  $\Lmat$, which are referred to as the graph frequencies in the \ac{gsp} literature.
 These eigenvalues are real and are assumed to be ordered as
		\be \label{eq; eigen values}
		0=\lambda_1<\lambda_2\le \ldots \le \lambda_n,
		\ee
  where the strict inequality between $\lambda_1$ and $\lambda_2$ follows because we only consider connected graphs.
		In addition, $\Umat$ is a matrix whose $n$th column, $\uvec_n$, is the eigenvector of $\Lmat$ associated with $\lambda_n$, and $\Umat^{T}=\Umat^{-1}$.
 The \ac{gft} of the graph signal $\svec$ is 
		\be \label{eq; GFT}
		\tilde{\svec}\define \Umat^T\svec.
		\ee 
		The resulting signal, $\tilde{\svec}$, has properties analogous to the discrete Fourier transform of time series \cite{ortega2018graph,zhu2012approximating}. 
       The \ac{igft} is given by
		$\svec \define \Umat\tilde{\svec}$.

		Analogous to classical \ac{dsp} theory, a graph filter is a system with a graph signal as an input and another graph signal as an output.
	 The filtering process is often defined by the filter frequency response $f(\cdot)$ as \cite{ortega2018graph}
		\be \label{eq; graph filter decomposition}
		f(\Lmat)=\Umat  f(\bLambda)  \Umat^T,
		\ee
		where $f(\bLambda)={\text{diag}}(f(\lambda_1),\ldots,f(\lambda_n))$, and $f(\lambda_n)$ is the graph  filter  frequency response  at the graph frequency $\lambda_n$.  
\begin{figure*}[hbt]
\centering
\includegraphics[trim={0.5cm 1.8cm 0.4cm 2.5cm},clip,width=14cm]{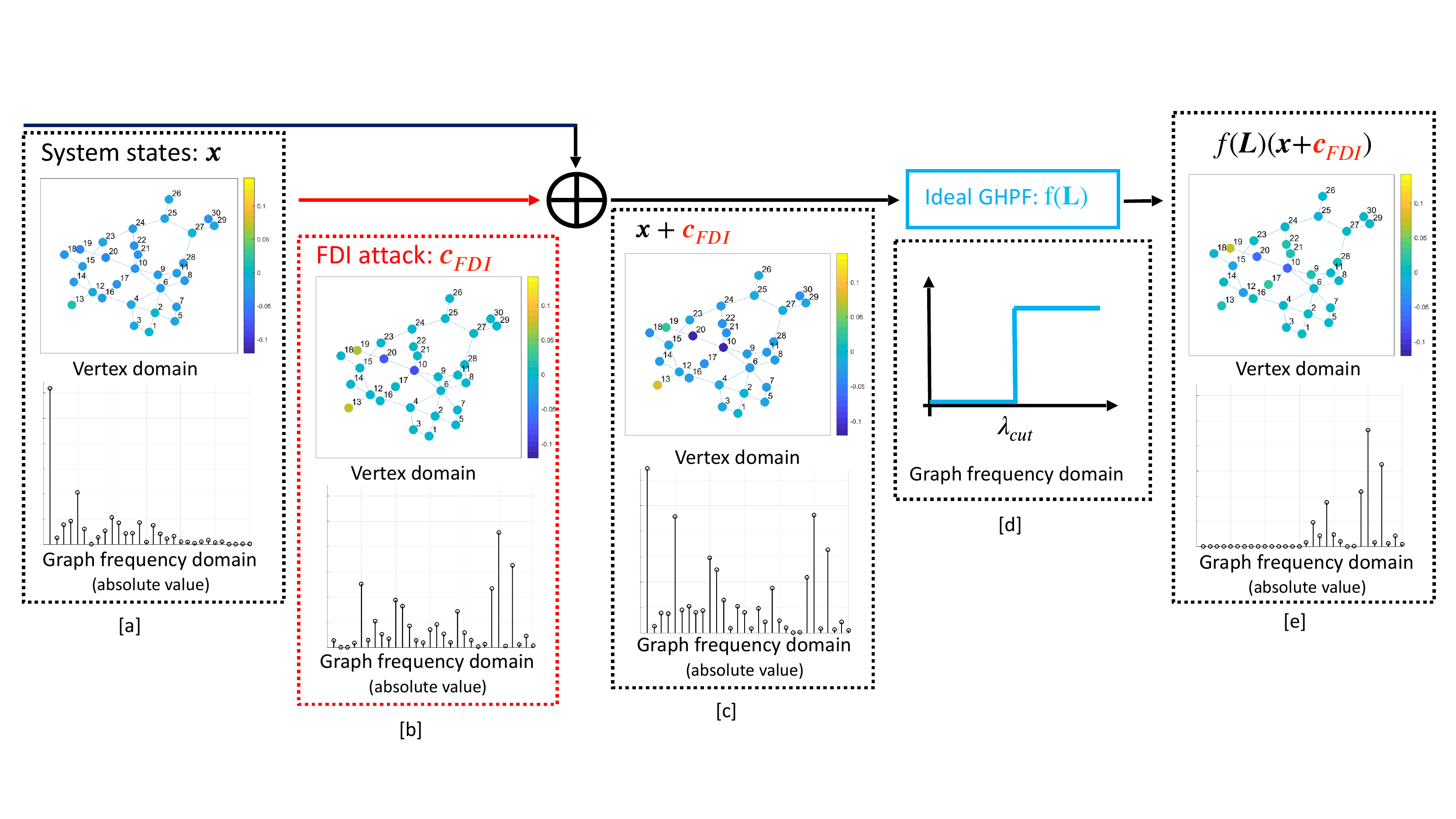} 
\caption{
    Illustration of GSP-based detection of \ac{fdi} attacks for the IEEE-30 bus system:
    the system states are presented in block [a] in both the vertex and graph frequency domains, where it can be seen that the states are smooth, low-frequency graph signals.
    In block [b], the state attack ($\cvec$ from \eqref{eq; FDI attack}) is presented, and block [c] presents the contaminated states. 
    It can be seen that, in contrast to the states, the state attack in [b] and the contaminated states in [c] 
    are not smooth graph signals and they have significant energy at the higher graph frequencies. 
    After filtering the contaminated states by the GHPF  defined in \eqref{eq; GHPF ideal} (shown in block [d]),
    the filtered signal in block [e] contains only energy at the higher graph frequencies. 
    Consequently, the attack can be discovered by the detector in \eqref{eq; detection GFT2}. }
\label{fig; FDI schematic}
\end{figure*}
	
\subsection{GSP smoothness measures} \label{sec; GSP anomaly}
The graph \ac{tv} of a graph signal $\svec$ is defined as \cite{shuman2013emerging} 
\be \label{eq; TV}
\begin{aligned}
    TV^{\Gcal}(\svec)\define \svec^T\Lmat\svec =\frac{1}{2} \sum _{k=1}^N\sum_{n=1}^N \omega_{k,n}\big(s_k - s_n\big)^2,
\end{aligned}
\ee
where the last equality is obtained by substituting \eqref{eq; L}.
The \ac{tv} in \eqref{eq; TV} is a smoothness measure, which is used to quantify changes w.r.t. the variability that is encoded by the weights of the graph 
\cite{ortega2018graph}. 
By substituting \eqref{eq; eigen decomposition} and \eqref{eq; GFT} in  \eqref{eq; TV}, we obtain 
\be \label{eq; TV freq}
\begin{aligned}
TV^{\Gcal}(\svec)=\sum_{k=1}^N\lambda_k\tilde{s}_k^2.
\end{aligned}
\ee
According to \eqref{eq; TV freq}, if $\svec$ is smooth, its \ac{gft} representation from \eqref{eq; GFT}, $\tilde{\svec}$, decreases as the graph frequency increases.
Thus, low graph \ac{tv} forces the graph spectrum of the signal to be concentrated in the small eigenvalues region.  

The concept of smoothness w.r.t. the graph has been generalized in \ac{gsp} theory. In \ac{gsp} theory, the smoothness of the graph signal $\svec$ is given by 
\be \label{eq; detection GFT}
T^f(\svec)= 	\tlvert f(\Lmat)\svec\trvert^2 
,
\ee
where $f(\Lmat)$ is a \ac{ghpf}.
The \ac{ghpf} is a graph filter as defined in \eqref{eq; graph filter decomposition}, where its frequency response, $f(\lambda_i)$, maintains lower values at the higher graph frequencies. 
In particular, using 
 the graph frequency response
\be \label{eq; GHPF TV} 
f^{TV}(\lambda_i) =\sqrt{\lambda_i},  ~i = 1,\ldots,N,
\ee
 we obtain that the smoothness measure  in \eqref{eq; detection GFT} for this case is reduced to the graph \ac{tv} from \eqref{eq; TV freq}.
Another example is the ideal \ac{ghpf}, which is defined by the frequency response \cite{Sandryhaila_Moura_2014}
\be \label{eq; GHPF ideal}
f^{id}(\lambda_i) = \begin{cases}
    0 & \lambda_i \le \lambda_{cut} \\
    1 & \lambda_i>\lambda_{cut} 
\end{cases}~~~i = 1, \ldots,N,
\ee
where the cutoff frequency, $\lambda_{cut}$, can be determined based on the application.

	   \subsection{Unobservable FDI attacks}
	A power system 
	is a network of buses (generators or loads) connected by transmission lines that can be represented as an undirected weighted graph, ${\mathcal{G}}({\mathcal{V}},\xi)$, where the set of nodes, $\mathcal{V}$, is the set of  $N$ buses, and the edge set, $\xi$, is the set of $P$ transmission lines between these buses.
	We denote the set of all
	sensor measurements by $\mathcal{M}$, and the set of transmission lines by $\xi$. 
        We consider the \ac{dc} model, in which each transmission line, $(k,n)\in \xi$, that connects buses $k$ and $n$ is characterized by a 
        susceptance value $b_{k,n}$,
       where the branch resistances are neglected 
      \cite{abur2004power}. 
        The following noisy and attacked measurement model is considered \cite{abur2004power}:
	\begin{equation} \label{eq; DC model}
		\zvec =\Hmat\thetavec+\avec+\nuvec,
	\end{equation}
	where $\zvec = [z_1,\dots,z_M]^T \in {\mathbb{R}}^M$ is the vector of active powers,
        $\thetavec=[\theta_1,\dots,\theta_N]^T\in {\mathbb{R}}^N$ are the state variables (voltage phases), and 
         $\Hmat\in {\mathbb{R}}^{M\times N}$ is the measurements matrix,  defined as follows.  
         If row $r$ is associated with the power flow measurement on line $(k,n)$, then 
         \be \label{eq; H flows}
         H_{\{r,j\}}=\begin{cases} 
         b_{k,n} & j=k \\
         -b_{k,n} & j=n \\
         0 & \text{otherwise}.
         \end{cases}
         \ee
         Otherwise, if row $r$ is associated with the power injection measurement in substation $k$, then 
         \be \label{eq; H injections}
         H_{\{r,j\}}=\begin{cases} 
         \sum_{n\in\mathcal{N}_k} b_{k,n} & j=k \\
          -b_{k,j} & j\in \mathcal{N}_k \\
         0 & \text{otherwise},
         \end{cases} 
         \ee
         where $\mathcal{N}_k$ is defined in \eqref{eq; L}. 
         Hence, the measurement matrix  is determined by the topology of the network, the susceptance of the transmission lines, and the meter locations.
         In addition, 
         $\avec\in\Rupm$ models an \ac{fdi} attack
         and $\nuvec\in {\mathbb{R}}^M$ is the measurement noise modeled as a 
	zero-mean Gaussian vector with covariance  $\Rmat$.
     
 In the considered setting, any subset of measurements can be regarded as part of the set encompassing active power injections and power flows, as described by the model in \eqref{eq; DC model} with the measurement matrices in  \eqref{eq; H flows} and \eqref{eq; H injections}. For the sake of simplicity and to ensure that the attack on the states will have an impact on the \ac{psse} approach, in this paper we assume observability of the system. That is, it is assumed that the set of measurements is such that all state variables can be estimated from the available measurements through standard \ac{psse}.

An \ac{fdi} attack is defined to be {\em{unobservable}} if 
\be \label{eq; FDI attack}
\avec=\Hmat\cvec, 
\ee
where the state attack $\cvec\in{\mathbb{R}}^N$ is a nonzero arbitrary vector. 
The attack in \eqref{eq; FDI attack}
cannot be detected by classical residual-based \ac{bdd} methods  \cite{liu2011false}.
This can be seen  by substituting \eqref{eq; FDI attack} into \eqref{eq; DC model}, which results in
\be \label{eq; DC model c}
\zvec =\Hmat(\thetavec+\cvec)+\nuvec.
\ee 
Thus, the residual calculated with assaulted measurements, $\zvec-\Hmat(\hat{\thetavec}+\cvec)$, is the same as it is for normal measurements, $\zvec_{normal} =\Hmat\thetavec+\nuvec$ (see, e.g., \cite{liu2011false,liang2017review}). 
At the same time, these attacks can be designed to have severe physical \cite{liang2015vulnerability,liang2014cyber} and economic consequences \cite{8219710,jia2012impacts}.

\vspace{-0.4cm}
   \subsection{GSP-based \ac{fdi} attack detection}
\label{GSP_FDI_detection}

The problem of detecting \ac{fdi} attacks based on the  \ac{dc} model in \eqref{eq; DC model} can be formulated as 
 the following hypothesis-testing problem:
\be \label{eq; hypothesis power}
\begin{cases}
\Hnull:~\hat{\thetavec}=\thetavec+\bar{\nuvec}  \\
\Ha:~\hat{\thetavec}=\thetavec+\cvec +\bar{\nuvec},
\end{cases}
\ee
 where $\hat{\thetavec}$ is the \ac{psse} output based on $\zvec$, and $\bar{\nuvec}$ is the error or noise term associated with the estimation.
 Recently, it has been shown that GSP-based detectors in the form of
\cite{drayer2019detection,ramakrishna2021grid,dabush2023state,Dabush_SAM_conf,shereen2022detection}: 
 \be \label{eq; detection GFT2}
T^f(\hat{\thetavec})\mathop{\gtrless}_{\Hnull}^{\Ha} \gamma,
\ee
where $T^f(\cdot)$ is defined in 
 \eqref{eq; detection GFT},
are able to solve the hypothesis-testing problem in
\eqref{eq; hypothesis power}.
 In this case, the Laplacian matrix of the graph, $\Lmat$, is selected to be
 the nodal admittance matrix, $\Bmat$,  
 which is a submatrix of $\Hmat$, 
  composed of the rows in $\Hmat$ associated with the power injection measurements described in \eqref{eq; H injections}. 
This selection is applied by setting the edge weights of the graph as $\omega_{k,l}=-b_{k,l}$, where $b_{k,n}<0$ is the susceptance of line $(k,n)\in \xi$. 
Thus, by 
 by substituting $\omega_{k,l}=-b_{k,l}$
 in \eqref{eq; L} we obtain the nodal admittance matrix, $\Bmat$. 

 The \ac{gsp}-based detector in \eqref{eq; detection GFT2} is based on the assumption that the state vector, $\thetavec$, is a smooth graph signal w.r.t. to $\Lmat=\Bmat$, i.e., that $TV^{G}(\thetavec)= \thetavec^T\Lmat\thetavec$
 is small compared to other signals in the system,
 as shown in 
 \cite{dabush2023state,ramakrishna2021grid}.
 In contrast, the state attack vector, 
 $\cvec$, is a general arbitrary vector that is not smooth w.r.t. the graph.
 The \ac{gsp}-based detector in  \eqref{eq; detection GFT2}  was implemented in \cite{drayer2019detection} 
with the ideal GHPF defined in \eqref{eq; GHPF ideal}, and in \cite{dabush2023state} with the graph \ac{tv} filter from \eqref{eq; GHPF TV},  both  with $\Lmat=\Bmat$.
 The detector in \eqref{eq; detection GFT2} can be  extended and used  for the \ac{ac} model, as explained in Subsection \ref{sec; sim gfdi analysis}. 

\section{Graph False Data Injection (\ac{gfdi}) Attacks} \label{sec; gfdi}
  
  The graph-based detection methodology presented in Section \ref{sec; background}, 
  which provides the \ac{psse} approach with an additional defense layer against \ac{fdi} attacks,
  is illustrated in  Fig. \ref{fig; FDI schematic}.
In this section, we demonstrate how an adversary could use graph-based information 
to design an attack that is concentrated in the spectral region of the small eigenvalues
and, thus, is more likely to bypass the GSP detection methods (an illustration is provided in Fig. 
\ref{fig;  gfdi example}).   
In particular, in Subsection \ref{sec; attack design}, we formulate the \ac{gfdi} attack as a constrained optimization problem. 
Then, in Subsection \ref{sec; implementation}, we derive the solution for the \ac{gfdi} attack optimization problem.
Finally, some remarks are given in Subsection \ref{sec; remarks}. 
    
     \begin{figure}[tbt]
     	\centering
     	\includegraphics[trim={12.5cm 3.5cm 14.5cm 3.5cm},clip,width=3.3cm]{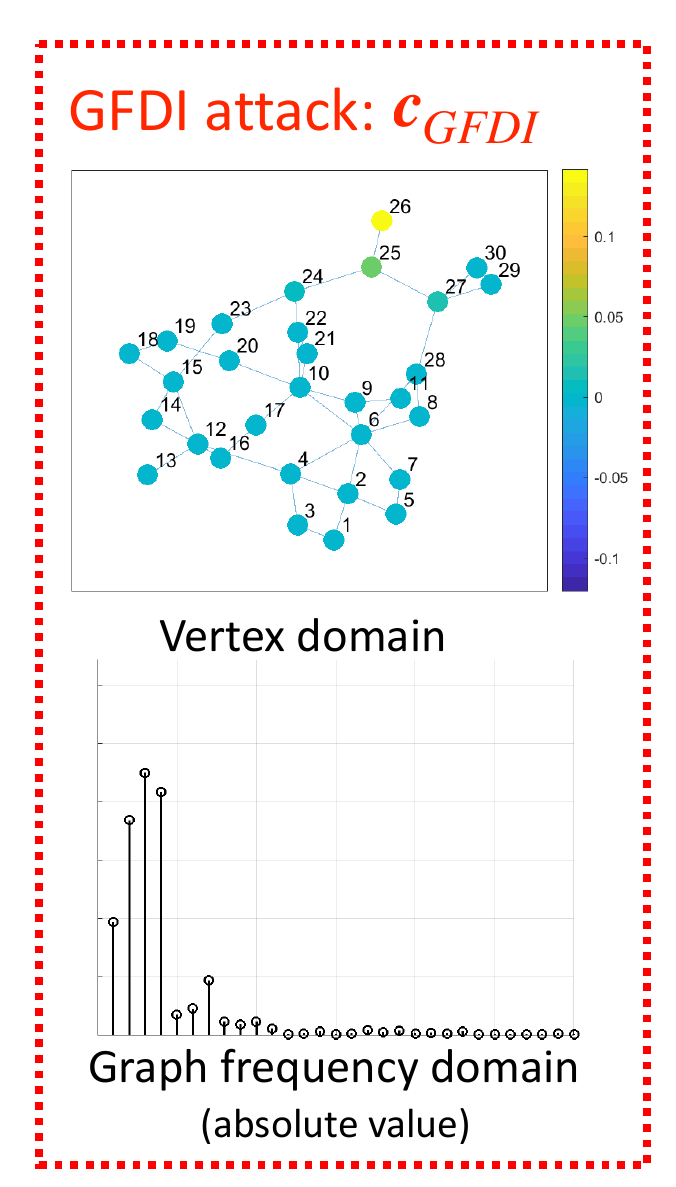}
     	\caption{Example of a \ac{gfdi} attack on the IEEE $30$-bus test case from Fig. \ref{fig; FDI schematic}.
              It can be seen that the attack energy is located at the lower graph frequencies. 
             Therefore, in contrast to the unobservable \ac{fdi} attack in Fig. \ref{fig; FDI schematic}[b], when added to the system states, the output will not obtain abnormal energy in the high graph frequencies and, thus, may bypass the \ac{gsp}-based detectors.}
     	\label{fig;  gfdi example}
     \end{figure}
     \setlength{\textfloatsep}{5pt}
     \vspace{-0.8cm}
\subsection{Attack design} \label{sec; attack design}
The proposed \ac{gfdi} attack is a special case of unobservable \ac{fdi} attacks specifically designed to bypass the graph-based detector in \eqref{eq; detection GFT2}.  
The main idea is to design the attack as the output of a smooth fake state signal.
Mathematically, let $\avec=\Hmat\cvec$ from \eqref{eq; FDI attack} be the attack vector, and $\Lmat=\Bmat$ be the graph Laplacian matrix. 
Note that $\Bmat$ is the \ac{dc} nodal admittance matrix, which is a submatrix of $\Hmat$. 
The state \ac{fdi} attack, $\cvec$, is encouraged to be a smooth signal w.r.t. the graph $\mathcal{G}$, i.e., to minimize the graph \ac{tv}, which according to \eqref{eq; TV} satisfies
\be \label{eq; TV c}
\begin{aligned}
TV^{\Gcal}(\cvec)= \cvec^T \Lmat  \cvec. 
\end{aligned}
\ee
Alternatively, as explained in Remark \ref{rem; graph filter}, the graph \ac{tv} measure in \eqref{eq; TV c}
can be replaced by its generalization in \eqref{eq; detection GFT}, 
which is used for the \ac{gsp}-based detection in \eqref{eq; detection GFT2}. 
 Simultaneously, the attack should have a substantial impact on the state estimation in order to cause damage.
	  Therefore, in order to make the attack ``meaningful'', the adversary further constrains that \cite{kim2011strategic}
	   \be \label{eq; c impact}
	  \tlvert \cvec \trvert_{\infty} \ge \tau,
	  \ee
 with some positive threshold, $\tau$.
 In other words, the deviation in the \ac{psse} output in \eqref{eq; hypothesis power} caused by 
	  at least one element in $\cvec$ should be larger than the threshold $\tau$. 
	  
	  In addition, the attack design considers practical limitations. 
	  First, as in other \ac{fdi} attacks (see, e.g., \cite{liu2011false, kosut2011malicious}), the attack should be sparse.
      Specifically, similar to previous works (see, e.g.,  \cite{gao2016identification,ozay2012distributed,morgenstern2022structural,tian2022exploring,kim2011strategic,liang2017false}), the sparsity constraint is applied directly on the state attack vector $\cvec$,  i.e., the number of the manipulated state variables is considered to be small. 
	  For a sparsity parameter $k\in \mathbb{R}_+$, the attack is constrained by 
	  \be \label{eq; c sparsity} 
	  \tlvert \cvec \trvert_{0} \le k,
	  \ee
	  where $\tlvert \cdot \trvert_0$ is the $\ell_0$ semi-norm defined as the number of nonzero elements of its argument. 
	
Simulations performed in \cite{liu2011false}   show that the assumption in \eqref{eq; c sparsity} stems directly from the commonly-used sparsity restriction on the number of manipulated meters, which states that the attack vector, $\avec$, is sparse.

      Finally, it is assumed that certain security constraints are imposed by the system designer. 
	  Specifically, we assume that a set of indices corresponding to the measurement set is protected. 
	  The measurement constraint for the attacker can be expressed as 
	  \be \label{eq; c restricted}
      \Hmat^{\Scal}\cvec=\zerovec. 
      \ee  
	 In the above, we denote by $\Hmat^{\Scal}$ the matrix formed by the $\lvert \Scal \rvert$ rows of $\Hmat$ indicated by the indices in $\Scal$.  The constraint in \eqref{eq; c restricted} implies that the measurements in the set $\Scal$ do not contribute to the attack vector $\avec$ in
  \eqref{eq; FDI attack}.

	 To conclude, the attack has two main conflicting goals: to be less detectable by GSP tools (as described in \eqref{eq; TV c}), 
	 while causing a significant impact on the power system (as described in \eqref{eq; c impact}). These two goals should be achieved while adhering to 
	 the physical constraints on the possible attacked locations: 
	 a quantitative constraint (described  in \eqref{eq; c sparsity}) 
	 and a qualitative constraint on the specific sensors (described in \eqref{eq; c restricted}).  
	 This can be formalized by the following optimization problem:
	     \be  \label{eq; gfdi start}
	  \begin{aligned}
	   \underset{\cvec\in\mathbb{R}^{N}}{\min} ~ \cvec^T \Lmat \cvec \\
	  	&\hspace{-0.5cm}\text{s.t.} \begin{cases}
	  		\tlvert \cvec \trvert_\infty \ge \tau \\
	  		\tlvert \cvec \trvert_0 \le  k \\
	  		\Hmat^{\Scal}\cvec=\zerovec.
	  	\end{cases}
	  \end{aligned}
	  \ee
	  Roughly speaking, the proposed \ac{gfdi} attack in \eqref{eq; gfdi start} is the smoothest attack possible
	  that ensures sufficient damage and considers 
      practical limitations of sparsity and restricted measurements. It can be seen that without the extra constraint in \eqref{eq; c impact}, a trivial optimal solution to the optimization in \eqref{eq; gfdi start} is $\cvec=\zerovec$, which means that the
attacker does not attack the system.

In the following, we show the necessity of the constraints from the \ac{gfdi} optimization problem in \eqref{eq; gfdi start}.
\begin{enumerate}

    \item \textbf{No impact:}
The impact parameter $\tau$ must satisfy $\tau>0$.
This is since otherwise, i.e., if $\tau=0$, then the solution of \eqref{eq; gfdi start} 
is $\cvec=\zerovec$, which implies zero-attack ($\avec=\zerovec$). 
    
    \item \textbf{Sparsity:} The sparsity parameter $k$ must satisfy $k< N$. Otherwise,
for $k=N$ the solution of \eqref{eq; gfdi start}
is in the linear space spanned by the first eigenvector of the Laplacian matrix, 
i.e., $\cvec\in \mathrm{span}(\mathbf 1)$. 
 This can be seen from the fact that the smallest eigenvalue of the Laplacian is $0$, which guarantees that $\cvec^T\Lmat\cvec=0$. This 
 is the lowest value of \ac{tv} possible, since $\Lmat$ is a positive semidefinite matrix. 
Moreover, it can be verified from \eqref{eq; H flows} and \eqref{eq; H injections} that a solution  $\cvec\in \mathrm{span}(\mathbf 1)$ results in
an attack vector $\Hmat\cvec=\zerovec$. 
Thus, $\cvec\in \mathrm{span}(\mathbf 1)$ is a feasible solution since it satisfies 
$\Hmat^{\mathcal{S}}\cvec=\zerovec$. 
However, in this case, the system is not really attacked in the sense of $\avec$, and thus, this is a degenerative case. 
It should also be noted that in the case where the constraint $\tlvert \cvec \trvert_0 \le  k $ is replaced with $\tlvert \avec \trvert_0 \le  k$, the solution is bound to satisfy $\cvec\in \mathrm{span}(\mathbf 1)$ such that it has no impact on the actual states. Thus, in the considered setting, $\tlvert \cvec \trvert_0 \le  k $ is the appropriate sparsity constraint.

\item \textbf{No availability:} 
A feasible solution may not exist
when the secured set $\mathcal{S}$ includes a substantial number of sensors. 
In the extreme case where $\mathcal{S}$ contains all the system sensors, it is obvious 
that an \ac{fdi} attack is infeasible. 
However, if an unobservable \ac{fdi} attack 
that satisfies the sparsity and impact restriction in \eqref{eq; gfdi start} is available for a selected set $\mathcal{S}$, 
then a \ac{gfdi} attack is available as well, i.e., \eqref{eq; gfdi start} has a solution.

\end{enumerate}
\vspace{-0.5cm}
\subsection{GFDI attack implementation} \label{sec; implementation}
         
 The optimization problem in \eqref{eq; gfdi start} is composed of a quadratic objective function, $ \cvec^T \Lmat \cvec$,
  with: 1) a concave inequality constraint, $\|\cvec\|_{\infty}\ge\tau$, since the $\ell_{\infty}$-norm  is a convex function \cite{boyd2004convex}; 
  2) a non-convex sparsity constraint, $\tlvert \cvec \trvert_0$; and 3) a linear constraint, $\Hmat^{\Scal}\cvec=\zerovec$.
  Hence, \eqref{eq; gfdi start} is a non-convex optimization problem.
 In this subsection, we derive a solution for this problem. 
 
The following theorem suggests an equivalent optimization problem for the \ac{gfdi}  attack optimization problem in \eqref{eq; gfdi start}. 
\begin{thm} \label{thm; gfdi cases} 
	The solution of the \ac{gfdi} attack optimization problem in \eqref{eq; gfdi start}, $\hat{\cvec}$, can be obtained by solving the following series of optimization
problems:
	\be  \label{eq; gfdi cases}
	\begin{aligned}
		\underset{i=\{1,\ldots,N\}}{\min}  	\underset{\cvec\in\mathbb{R}^{N}}{\min} ~ \cvec^T \Lmat \cvec \\
		&\hspace{-0.5cm}\text{s.t.} \begin{cases}
			 c_{i}=\tau \\
            \tlvert \cvec \trvert_0 \le  k\\
             \Hmat^{\Scal} \cvec=\zerovec.			
		\end{cases}
	\end{aligned}
	\ee
\end{thm}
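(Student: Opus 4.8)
The plan is to establish Theorem~\ref{thm; gfdi cases} by showing that \eqref{eq; gfdi start} and the family \eqref{eq; gfdi cases} have the same optimal value and compatible minimizers, using only two elementary facts: (i) the constraints \eqref{eq; c sparsity} and \eqref{eq; c restricted} are invariant under the rescaling/sign map $\cvec\mapsto t\cvec$ for every scalar $t\neq 0$ --- this is exactly where it matters, as stressed after \eqref{eq; c sparsity}, that the sparsity budget is imposed on $\cvec$ and not on $\avec$ --- and (ii) $\Lmat$ is positive semidefinite (from \eqref{eq; eigen values}), so $\cvec^T\Lmat\cvec\ge 0$ and $(t\cvec)^T\Lmat(t\cvec)=t^2\,\cvec^T\Lmat\cvec$ is nondecreasing in $|t|$. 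Denote by $p^\star$ the optimal value of \eqref{eq; gfdi start} (assumed finite, i.e.\ the problem is feasible; otherwise the statement is vacuous) and by $p_i^\star$ the optimal value of the $i$th inner problem in \eqref{eq; gfdi cases}. The claim is that $\min_{i}p_i^\star=p^\star$ and that a minimizer of an index $i$ attaining this minimum solves \eqref{eq; gfdi start}.

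For $\min_i p_i^\star\ge p^\star$: any $\cvec$ feasible for the $i$th problem in \eqref{eq; gfdi cases} has $c_i=\tau$, hence $\tlvert\cvec\trvert_\infty\ge\tau$, and it also satisfies \eqref{eq; c sparsity} and \eqref{eq; c restricted}; thus it is feasible for \eqref{eq; gfdi start}, so $p_i^\star\ge p^\star$ for every $i$. For the reverse inequality, let $\hat\cvec$ be a minimizer of \eqref{eq; gfdi start}, choose an index $j$ of a largest-magnitude entry, so $|\hat c_j|=\tlvert\hat\cvec\trvert_\infty\ge\tau>0$, and set $\alpha=\tau/|\hat c_j|\in(0,1]$, $\epsilon=\mathrm{sign}(\hat c_j)\in\{-1,1\}$, and $\cvec'=\epsilon\alpha\,\hat\cvec$. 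Then $c_j'=\tau$; $\cvec'$ still satisfies \eqref{eq; c sparsity} (because $\epsilon\alpha\neq 0$) and \eqref{eq; c restricted} (because $\Hmat^{\Scal}(\epsilon\alpha\hat\cvec)=\epsilon\alpha\,\Hmat^{\Scal}\hat\cvec=\zerovec$); hence $\cvec'$ is feasible for the $j$th inner problem. Moreover $(\cvec')^T\Lmat\cvec'=\alpha^2\,\hat\cvec^T\Lmat\hat\cvec\le\hat\cvec^T\Lmat\hat\cvec=p^\star$ since $0<\alpha\le 1$ and $\Lmat\succeq 0$. Therefore $p_j^\star\le p^\star$, which combined with the first inequality gives $p_j^\star=p^\star=\min_i p_i^\star$; and the minimizer of the $j$th inner problem is feasible for \eqref{eq; gfdi start} (by the argument of the first inequality) and attains $p^\star$, hence solves \eqref{eq; gfdi start}.

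The only step that requires care is the reduction $\hat\cvec\mapsto\cvec'$: one must verify simultaneously that rescaling together with a sign flip (a) converts the slack inequality $\tlvert\cvec\trvert_\infty\ge\tau$ into an active equality $c_i=\tau$ on some coordinate, (b) preserves the support, (c) preserves the homogeneous linear constraint $\Hmat^{\Scal}\cvec=\zerovec$, and (d) does not increase $\cvec^T\Lmat\cvec$; all four hold precisely because of invariances (i) and (ii) above, so this becomes routine once the invariances are isolated. A side issue to dispatch is the existence of the minimizer $\hat\cvec$ used above: for a fixed support of size at most $k$ and a fixed active index $i$, the problem \eqref{eq; gfdi cases} reduces to minimizing the convex quadratic $\cvec^T\Lmat\cvec$ over an affine subspace, whose infimum is attained (a convex quadratic bounded below on an affine set attains its minimum), and there are only finitely many supports and indices, so $\min_i p_i^\star$ is attained; by the equality just proved, so is $p^\star$. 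Alternatively, the argument goes through verbatim with a minimizing sequence for \eqref{eq; gfdi start} in place of $\hat\cvec$, using that the chosen index $j$ lies in the finite set $\{1,\dots,N\}$ infinitely often.
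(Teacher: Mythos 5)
Your proof is correct and follows essentially the same route as the paper's Appendix~A: both directions rest on the observation that inner-problem feasible points are feasible for \eqref{eq; gfdi start}, plus the sign-flip-and-rescale map $\hat\cvec\mapsto\mathrm{sign}(\hat c_j)\frac{\tau}{|\hat c_j|}\hat\cvec$ that places an optimal point of \eqref{eq; gfdi start} onto the slice $c_j=\tau$ without increasing $\cvec^T\Lmat\cvec$ or violating the sparsity and $\Hmat^{\Scal}\cvec=\zerovec$ constraints. Your added discussion of attainment (convex quadratic over finitely many affine slices) and your explicit scaling factor $\tau/|\hat c_j|$ are welcome refinements of details the paper leaves implicit, but they do not change the argument.
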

\begin{proof} 
	The proof is given in Appendix \ref{App; cases}. 
\end{proof}
It can be seen that in the inner minimization in \eqref{eq; gfdi cases}, the $\ell_{\infty}$ inequality constraint from \eqref{eq; gfdi start} is replaced with the $i$th linear constraint, $c_i=\tau$.
In addition, it can be seen that the inner optimization problem in  \eqref{eq; gfdi cases} is composed of a quadratic objective function with sparsity and linear constraints. 
Hence, the major issue for the attacker is that solving optimization problems with $\ell_0$ constraints is, in general,  NP-hard. 
Following standard sparse recovery techniques \cite{elad2010sparse}, the $\ell_0$-norm can be replaced by its $\ell_1$-norm relaxation version 
that promotes sparsity. Thus, the attacker can solve the following 
convex relaxation:
\be  \label{eq; gfdi cases inner l1}
\begin{aligned}
	\underset{\cvec\in\mathbb{R}^{N}}{\min} ~ \cvec^T \Lmat \cvec \\
	&\hspace{-0.5cm}\text{s.t.} \begin{cases}
		    c_{i}=\tau \\
                \tlvert \cvec \trvert_1 \le  k\\
			 \Hmat^{\Scal} \cvec=\zerovec.	
	\end{cases}
\end{aligned}
\ee
That is, instead of solving \eqref{eq; gfdi start}, the adversary should solve a series of $N$ convex optimization problems 
obtained by setting $i=\{1,\ldots,N\}$ in \eqref{eq; gfdi cases inner l1}. 
The final step comprises selecting the minimum from the $N$ solutions.

By introducing the nonnegative vector variables $\uvec\ge \zerovec$ and $\vvec\ge \zerovec$, where the vector inequality indicates elementwise inequalities, 
 the problem in \eqref{eq; gfdi cases inner l1} can be formulated as a quadratic programming problem, as follows:
\be  \label{eq; gfdi QP}
	\begin{aligned}
		 	\underset{\cvec\in\mathbb{R}^{N}}{\min} ~ \cvec^T \Lmat \cvec \\
		&\hspace{-0.5cm}\text{s.t.} \begin{cases}
			 c_{i}=\tau \\
             \Hmat^{\Scal} \cvec=\zerovec \\
             \cvec=\uvec-\vvec \\
             \onevec^T(\uvec+\vvec)\le k \\
             \uvec\ge \zerovec, \vvec\ge \zerovec,
		\end{cases}
	\end{aligned}
 \ee
 where $\onevec$ is the all-one vector.
The quadratic programming problem in \eqref{eq; gfdi QP} can be efficiently solved using interior point methods\cite{boyd2004convex}, e.g., using the Matlab function \textit{quadprog}. 

After solving \eqref{eq; gfdi QP}, if a feasible solution is found, then case $i$ is added to the set of possible solutions,  $\mathcal{I}$,
 and the solution is denoted by $\cvec^{*,i}$. 
In order for the solution  $\cvec^{*,i}$ to be at most $k$-sparse, as required by the original constraints of the problem in \eqref{eq; gfdi cases}, we apply a hard-thresholding step and only keep the $k$  largest  (in the sense of the absolute value of the magnitudes) components of  $\cvec^{*,i}$, while zeroing out
the remaining entries. 
On the other hand, if a feasible solution is not available, then case $i$ is not included in $\mathcal{I}$.

After solving \eqref{eq; gfdi QP} for $i=1,\ldots,N$, we determine the optimal position to attack the system by choosing the optimal solution among the candidates in $\mathcal{I}$:
\be
\label{hat_i}
\hat{i}=\arg\underset{i \in \mathcal{I}}{\min}~(\cvec^{*,i})^T \Lmat \cvec^{*,i}.
\ee

Next, we
calculate the optimal attack by setting $\hat{\avec}=\Hmat\hat{\cvec}$, where $\hat{\cvec}=\cvec^{*,\hat{i}}$ and $\hat{i}$ is given in \eqref{hat_i}.
However, due to the thresholding step applied after solving \eqref{eq; gfdi QP}, there is no guarantee that the resulting $\hat{\cvec}$
maintains the constraint on the secured sensors from \eqref{eq; gfdi QP}, i.e., there is no guarantee that $\hat{\avec}_{\mathcal{S}}=\Hmat^{\Scal}\hat{\cvec}=\zerovec$.
Therefore, in order for the final solution $\hat{\cvec}$ to be a feasible solution, we set the elements in $\hat{\avec}$ that correspond to the set $\mathcal{S}$ to zero, i.e., we set
\be \label{eq; a S}
\hat{\avec}_{\mathcal{S}}=\zerovec.
\ee

\begin{algorithm} [ht]
\caption{\ac{gfdi} attack creation}
\label{alg; gfdi}
\begin{algorithmic}[1]
\renewcommand{\algorithmicrequire}{\textbf{Input:}}
\renewcommand{\algorithmicensure}{\textbf{Output:}}
\REQUIRE ~ $\Lmat$, $k$, $\tau$, $\Scal$,  $\Hmat$  \vspace{0.1cm}
\ENSURE  $\hat{i}$, $\hat{\cvec}$,  and $\hat{\avec}$ 
\STATE  Initialize $\mathcal{I}=\emptyset$ 
\FOR {$i\in\{1,\ldots,N\}$}

\STATE solve \eqref{eq; gfdi QP} (e.g., by Matlab  \textit{quadprog})
\IF{a feasible solution, $\cvec^{*,i}$, is found }
 \STATE update $\mathcal{I}=\mathcal{I} \cup i$
\STATE set $c^{*,i}_m=0$ for $m>k$ and $\lvert c^{*,i}_{j_1}\rvert \ge \ldots \ge \lvert c^{*,i}_{j_N}\rvert$
 \ELSE
 \STATE go to line 2
 \ENDIF
\ENDFOR	
\IF{ $\mathcal{I}=\emptyset$}
\RETURN ``no feasible solution''
\ELSE
\STATE compute: $\hat{i}=\arg\underset{i \in \mathcal{I}}{\min}~ (\cvec^{*,i})^T \Lmat \cvec^{*,i}$ 

\STATE  set $\hat{\cvec}=\cvec^{*,{\hat{i}}}$ and $\hat{\avec}=\Hmat\hat{\cvec}$
\STATE  update $\hat{\avec}_{\mathcal{S}}=\zerovec$

\RETURN  $\hat{i}$, $\hat{\cvec}$,  and $\hat{\avec}$ 
\ENDIF
\end{algorithmic} 
\end{algorithm}

\vspace{-1cm}
\subsection{Remarks} \label{sec; remarks} 
\subsubsection{Attack Unobservability}

Due to Step $16$ in Algorithm \ref{alg; gfdi}, the attack $\hat{\avec}$ can be viewed as the following superposition:
\be
\hat{\avec}=\Hmat\hat{\cvec}+\tilde{\avec},
\ee
where the vector $\tilde{\avec}$ is such that $\tilde{\avec}_{S}=\hat{\avec}^{S}-\Hmat^{\mathcal{S}}\hat{\cvec}$ and $\tilde{\avec}_{\{\mathcal{M}\setminus \mathcal{S}\}}=\zerovec$.
Consequently, the proposed attack cannot be considered as a pure unobservable attack in the sense of \eqref{eq; FDI attack}. 
Nonetheless, we expect that $\|\tilde{\avec}\|\le \epsilon$, where $\epsilon$ is a significantly low value.  
This implies that 
$$
\|\hat{\avec}-\Hmat\hat{\cvec}\|_2^2\le \epsilon.
$$ 
Hence, this attack can be referred to as a generalized unobservable \ac{fdi} attack (see Section 4.1. in  \cite{liu2011false}) and is still expected to be undetected by classical residual-based \ac{bdd} methods if a certain degree of noise is present.

\subsubsection{Known topology}
An assumption behind 
the \ac{gfdi} attack design in \eqref{eq; gfdi start} is that $\Lmat$ is known.
In other words, it is required that the adversary knows the power system network configuration. 
The same assumption is required for generating the unobservable FDI attack in \eqref{eq; FDI attack} \cite{liu2011false,kosut2011malicious}. 
The assumption that $\Hmat$ (and consequently its submatrix $\Bmat$) is known or can be estimated from historical data \cite{kekatos2015online,kim2014subspace,grotas2019power,Halihal_Routtenberg_2022},
gives the adversary more power than usually is possible in reality. 
We note that this is a well-adopted practice in the cyber-security community, which increases the system's resilience.

\subsubsection{Generalization to graph filters} \label{rem; graph filter}
The objective function of the \ac{gfdi} optimization problem in \eqref{eq; gfdi start}
can be generalized by replacing the graph \ac{tv} measure in \eqref{eq; TV c} 
with a general \ac{gsp}-based smoothness measure in \eqref{eq; detection GFT2}, 
which measures the energy of the estimated \ac{psse} after it has been filtered by a selected \ac{ghpf}. 
In this case,  the quadratic programming optimization in \eqref{eq; gfdi QP} 
is modified by replacing the cost function 
with $f(\cvec)=\|f(\Lmat)\cvec\|^2$, 
where $f(\Lmat)$ is a general graph filter as defined in \eqref{eq; graph filter decomposition}.
In particular, we can obtain the following two special cases:  
1) if the \ac{ghpf} is selected as \eqref{eq; TV},
then we obtain the same result provided by the graph \ac{tv} measure, described above and summarized in Algorithm \ref{alg; gfdi}; 
and 2) if the \ac{ghpf} is selected as \eqref{eq; GHPF ideal}, then the attack obtained will be a graph low pass signal with energy located only in the $k$ smallest graph frequencies, as conceptualized in \cite{shereen2022detection}.
Thus, the proposed approach also suggests an implementation for the theoretical idea in \cite{shereen2022detection}.

\subsubsection{Applicability of the \ac{gfdi} attack for the \ac{ac} model}  \label{rem; AC}
The nonlinear \ac{ac}
provides a more accurate representation of the power flow equations 
than the \ac{dc} model presented in Section \ref{sec; background} \cite{kosut2011malicious}.  
As described in Section \ref{sec; background},  constructing an unobservable \ac{fdi} attack for the \ac{dc} model 
 required knowledge of the system configuration represented by $\Hmat$.
In contrast, constructing an unobservable \ac{fdi} attack for the \ac{ac} model requires knowledge of both $\Hmat$ 
and the current state of the system \cite{kosut2011malicious}, which is often considered unrealistic. 
However, since the \ac{dc} model is a linearization of the
\ac{ac} model, a \ac{dc}-based attack could work approximately on the \ac{ac} model \cite{kosut2011malicious}. 
We show the applicability of our attack design on both the \ac{dc} and the \ac{ac} models in the simulation study in Section \ref{sec; simulations}.

\section{Strategic protection of the power network}  \label{sec; protection} 

In addition to \ac{gsp}-based detection, the operator 
may install additional security hardware to disable access to selected measurements for an adversary. 
In this section our goal is to demonstrate how to harvest the graph-based knowledge,
 used for the design of the \ac{gfdi} attack in Section \ref{sec; gfdi}, in order to strategically select the protected measurements.
 Specifically, we aim to identify the minimal set of measurements required to prevent the possibility 
 of a \ac{gfdi} attack. 
This section is organized as follows. In Subsection \ref{sec; protection design} 
we discuss the protection scheme design. Then, in Subsection \ref{sec; protection implemenation} 
we provide a practical implementation. We conclude with remarks in Subsection \ref{sec; protection remarks}. 

\vspace{-0.4cm}
\subsection{Protection scheme design} \label{sec; protection design}
In practice, our protection scheme identifies a minimal set of {\em{state variables}} (denoted by $\Dcal$), such that if secured (not manipulated), it would disable the possibility of generating 
a \ac{gfdi} attack. 
After identifying $\Dcal$, we can then deduce the set of secured {\em{measurements}} (denoted by $\mathcal{S}$) in one of the following two ways:
1) the set $\mathcal{S}$ is composed from the power measurements positioned in  
the buses in set $\Dcal$ (power injections) and in the lines entering these buses 
(power flows); and 2) the set $\mathcal{S}$ is composed from \ac{pmu} measurements installed at the same locations as in set $\Dcal$.
In our implementation, we adopt the first option. 

The proposed protection scheme is formulated by the following  heuristic  optimization problem:
\be \label{eq; protection}
\begin{aligned}
\hat{\Dcal}= \arg	\underset{\Dcal\subseteq\Vcal}{\min}~\lvert\Dcal\rvert~
s.t.~(\hat{\cvec})^T\Lmat\hat{\cvec}>\delta, 
\end{aligned}
\ee
where $\hat{\cvec}$ is the output of the \ac{gfdi} attack optimization problem in \eqref{eq; gfdi start}, which is implemented by Algorithm \ref{alg; gfdi}.   
Thus, the problem in \eqref{eq; protection} searches for the set $\Dcal$ with the minimal cardinality, 
for which the graph \ac{tv} of the optimal \ac{gfdi} solution exceeds a user-defined threshold $\delta>0$, and therefore, can be detected by GSP methods. 
Selecting a minimal set of secured state variables enables the operator to reduce the installation cost of security hardware.

The underlying assumption behind the strategic protection design in \eqref{eq; protection} is its effectiveness against the proposed \ac{gfdi} attack, along with its capability to provide a defense against various potential smooth attack vectors that may not be the optimal \ac{gfdi} attack. Furthermore, this framework can be extended to other GSP-based attacks described in Remark \ref{rem; graph filter} by using the generalization to other GHPFs. Future research is needed to identify an optimal protection strategy that can defend the system against a broad spectrum of smooth attacks.

\vspace{-0.4cm}
\subsection{Implementation} \label{sec; protection implemenation}
The problem in \eqref{eq; protection} is a combinatorial optimization problem
with a non-submodular objective function. 
Thus, the number of possible instances for $\Dcal$ grows exponentially with the system size,  $N$. 
Moreover, it can be seen that, for each instance, the objective function requires
implementing Algorithm \ref{alg; gfdi}.
Thus, \eqref{eq; protection} will suffer from high computational complexity
due to the exhaustive search and is not practical for large networks. 
Therefore,  we propose a low-complexity greedy algorithm for selecting the state variables to be protected, as described in Algorithm \ref{alg; protection}.

We start with an empty set, $\mathcal{D}=\emptyset$, and iteratively compose $\Dcal$ as follows.
In each iterative step, the \ac{gfdi} problem in \eqref{eq; gfdi cases} is first solved 
by Algorithm \ref{alg; gfdi}, taking into account the set $\Scal$ 
composed from the power measurements positioned at 
the buses in the current set $\Dcal$ (power injections) and in the lines entering these buses 
(power flows). 
Then, the state variable at position $i$, where the attack obtains the 
maximal value $\tau$, is added to the secured state variable set $\mathcal{D}$. 

\begin{algorithm} [h!]
\caption{Protection scheme}
\label{alg; protection}
\begin{algorithmic}[1]
    \renewcommand{\algorithmicrequire}{\textbf{Input:}}
    \renewcommand{\algorithmicensure}{\textbf{Output:}} 
    \REQUIRE  $\Lmat$, $k$, $\tau$, $\delta$
    \ENSURE  $\Dcal$
    \STATE Initialize: $\Dcal=\emptyset$ 
    \REPEAT
      \STATE derive $\Scal$ from $\Dcal$ by including power injections in the buses in $\Dcal$ and power flows entering the same buses
    \STATE\label{stage3} get $\hat{i}$ and $\hat{\cvec}$ from \ac{gfdi} Algorithm  
    \STATE add $\hat{i}$ to $\Dcal$
    \UNTIL{$(\hat{\cvec})^T\Lmat \hat{\cvec}> \delta$}
    \RETURN  $\Dcal$
\end{algorithmic} 
\end{algorithm}	

   \vspace{-0.6cm}
\subsection{Remarks} \label{sec; protection remarks}
\subsubsection{Generalization to graph filters} \label{rem; protection; graph filter}
If the \ac{gfdi} optimization problem is realized 
with a general graph filter as explained in Remark $2)$ in Subsection \ref{sec; remarks}, 
then the constraint in \eqref{eq; protection} should be replaced accordingly. 
The following is conducted by replacing the graph \ac{tv} measure in \eqref{eq; protection} 
with $\|f(\Lmat)\cvec\|^2$, which is obtained by substituting $\svec=\hat\cvec$ in \eqref{eq; detection GFT}. 
In this case, row $4$ of Algorithm \ref{alg; protection} calls Algorithm \ref{alg; gfdi} with the changes discussed in Remark $2)$ in Subsection \ref{sec; remarks}.

\subsubsection{Comparison to previous protection schemes} \label{rem; protection; previous}
The proposed protection scheme differs from previous designs in that it does not aim to prevent the adversary from launching an \ac{fdi} attack. Rather, it seeks to eliminate the possibility of an 
 attack with low graph \ac{tv}. 
 Consequently, the proposed designs forces the \ac{gfdi} output to be a non-smooth attack, which, with high probability, would be detected by a \ac{gsp}-based detector. As a result, there are two minor drawbacks and one significant advantage.
The first drawback is that a \ac{gsp}-based detector must be installed in the system control center. However, since these detectors are software-based, they can be implemented easily.
The second drawback is that if an attack is detected, the system still operates with unreliable measurements, which is currently a common limitation of detection policies.
On the other hand, the relaxation provided by enabling a \ac{gfdi} attack instead of no \ac{fdi} attack at all significantly reduces the number of secured measurements required, as demonstrated in the simulation study (see Fig. \ref{fig; protection}).

\vspace{-0.2cm}
\section{Simulation study} \label{sec; simulations}

This section demonstrates the performance of the proposed \ac{gfdi} attack and GSP-based protection policy by numerical simulations.
The simulations are conducted on the IEEE-$57$ and IEEE-$118$ bus test cases,
where the topology matrix and measurement data are extracted using the Matpower toolbox for Matlab \cite{matpower}.
In Subsection \ref{sec; sim methods}, we describe the \ac{fdi} attack construction methods, 
and protection policies used as reference.  
The simulation setup is then provided in Subsection \ref{sec; set up}. 
Next, we present the analysis conducted for the 
 \ac{gfdi} attack in Subsection \ref{sec; sim gfdi analysis}
 and for the \ac{gsp}-based protection policy in Subsection \ref{sec; sim protection analysis}. 

\subsection{Methods} \label{sec; sim methods}

\subsubsection{\ac{fdi} attack constructions}
\label{sec; sim; FDI constructions}
The \ac{gfdi} attack (denoted as \texttt{GFDI}) implemented by Algorithm \ref{alg; gfdi} is compared to previous unobservable \ac{fdi} attacks. 
Considering that an unobservable \ac{fdi} attacks satisfies 
$\avec=\Hmat\cvec$, where $\Hmat$ is a known parameter, 
the attack can be equivalently defined using the state attack $\cvec$. 
The previous designs include the following:
\begin{enumerate}[label=\textbf{A.\arabic*}, leftmargin=1cm]
\item\label{Attack; random} 
    Random unobservable \ac{fdi} attack \cite{liu2011false} (denoted as \texttt{rand}): 
    This attack is constructed by: 1) selecting $k$ elements randomly for the support of the state attack, $\cvec$; and 2) assigning random values at these index locations
    according to the Gaussian distribution, $ \mathcal{N}(0,1)$. 
\item 
    Random unobservable \ac{fdi} attack with \ac{gfdi} support  (denoted as \texttt{rand+\ac{gfdi}}): 
        The state attack support is the one obtained from the \ac{gfdi} solution in Algorithm \ref{alg; gfdi}, where
     random values are assigned to the selected indices following the Gaussian distribution, $\mathcal{N}(0,1)$. 
     \label{Attack; random gfdi}
    \item 
    The sparsest unobservable \ac{fdi} attack with the lowest graph \ac{tv} (denoted as \texttt{sparse-low}): 
    The attack defined by the linear programming problem in Equation (14) in \cite{kim2011strategic} solved using the Matlab function \textit{linprog}. 
    If  there is more than one solution, 
    the one with the lowest graph \ac{tv} is chosen as the attack.
\label{Attack; sparsest lowest}
    \item 
    The sparsest unobservable \ac{fdi} attack with average graph \ac{tv} (denoted as \texttt{sparse-avg}):  The attack defined by the linear programming problem in Equation (14) in \cite{kim2011strategic} solved using the Matlab function \textit{linprog}. 
    In this case, if there is more than one solution,
    the one closest to the 
    average graph \ac{tv} is chosen.
   \label{Attack; sparsest average}
\end{enumerate}
The attacks are scaled to satisfy $\tlvert \cvec \trvert_{\infty}=\tau$.

\subsubsection{Protection policies}
The proposed \texttt{GSP-based protection} policy, suggested in Section \ref{sec; protection} 
in the presence of a \ac{gfdi} attack, is compared with the following protection policies:
\begin{enumerate}[label=\textbf{P.\arabic*}, leftmargin=1cm]
 \item 	  \texttt{Random-based protection} policy, where the elements in $\mathcal{D}$ are selected randomly. \label{P; random}
 \item    \texttt{Sparsest-based protection} policy, where the elements in $\mathcal{D}$ are selected to increase the number of nonzero elements in $\cvec$ required for a feasible unobservable \ac{fdi} attack described by Equation (14) in \cite{kim2011strategic}.
 The policy is defined in Algorithm $3$ in \cite{kim2011strategic}.\label{P; sparsest}
\end{enumerate}

\vspace{-0.3cm}
\subsection{Simulation setup} \label{sec; set up} 

The simulations were conducted on the IEEE-$57$ and IEEE-$118$ bus test cases. 
Detection thresholds were computed from simulated historical data obtained by $10,000$ off-line simulations of \eqref{eq; DC model} under the null hypothesis.
 All numerical results were obtained using $10,000$ Monte Carlo simulations.
 The \ac{fdi} attacks in both settings are modeled by the \ac{gfdi} attack 
 and Attacks \ref{Attack; random}-\ref{Attack; sparsest average}.

 \subsubsection{\ac{dc} model}
For the \ac{dc} model, the measurements are computed using \eqref{eq; DC model}. 
The state vector, $\xvec$, is assumed to be a smooth graph signal that has a low graph \ac{tv}, as defined in \eqref{eq; TV}, 
and is modeled as follows \cite{dieci1999smooth,dong2016learning}. 
In particular, 
we first set $\tilde{x}_1=0$ and generate
\be \label{eq; sim smooth phase}
\tilde{\xvec}_{2:end}\sim \mathcal{N}(\zerovec,\beta\boldsymbol{\Lambda}^{-1}_{2:end,2:end}),
\ee
where $\beta$ is a smoothness level selected as $\beta=0.05$. 
Then, we use the \ac{igft} defined below \eqref{eq; GFT} to obtain $\xvec=\Umat\tilde{\xvec}$. 

The measurement noise is modeled as a zero-mean Gaussian noise with variance $0.001\Imat$. 
In addition, one of the additive attacks presented in Subsection \ref{sec; sim methods} is added to the measurements. 

\subsubsection{\ac{ac} model}
For the \ac{ac} model, the measurements are modeled by
\be \label{eq; AC power flow model}
\zvec_{AC}= \vvec(\Yvec \vvec)^*+\avec+\evec_c,
\ee
where $\vvec$ are the complex voltages located at the system buses and $\Ymat$ is the admittance matrix.
The voltage phases are generated according to \eqref{eq; sim smooth phase}.
The voltage magnitudes, which are assumed to be close to $1$, are generated according to 
\be \label{eq; sim abs}
\lvert\vvec\rvert \overset{i.i.d}{\sim} \mathcal{N}(1,0.01).
\ee
The matrix $\Ymat$ follows the structure in \eqref{eq; L}, but, in this case, the weights are complex \cite{drayer2019detection}. 
As discussed in Subsection \ref{sec; remarks}, 
the same attack used on the \ac{dc} model is also used on the \ac{ac} model. 
Hence, one of the attacks in Subsection \ref{sec; sim methods} is added to the measurements. 
The measurement noise, $\evec_c$, is modeled as a circularly-symmetric complex Gaussian vector with zero-mean and a variance of $0.001$, i.e., $\evec_{c}\sim \mathcal{CN}(\zerovec,0.001\Imat)$.

\begin{figure}[t!]
\centering
 \begin{subfigure}[b]{0.5\textwidth}
\centering
\includegraphics[width=7cm]{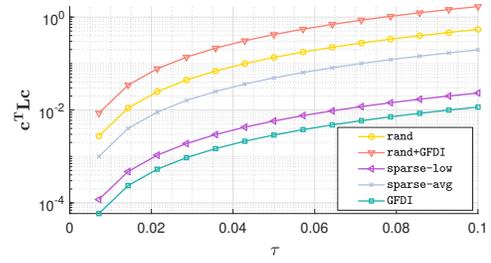}
\caption{Graph \ac{tv}}
\label{fig; tau GTV}
\end{subfigure}
\hfill 
 \begin{subfigure}[b]{0.5\textwidth}
\centering
\includegraphics[width=7cm]{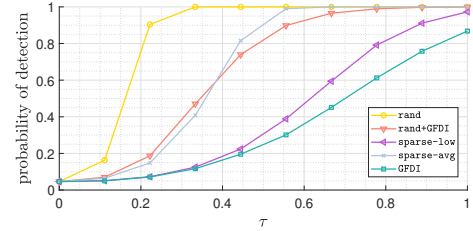}
\caption{Probability of detection}
\label{fig; tau pd}
\end{subfigure}
\caption{The \ac{gfdi} attack is compared to Attacks \ref{Attack; random}-\ref{Attack; sparsest average}. In (a)  and (b), the graph \ac{tv} of the attacks  and the probability of detection, respectively,  are presented versus $\tau$.  
The curves of the probability of detection are generated w.r.t. to the \texttt{GTV-GHPF} detector for the DC model. }
\label{fig; tau}
\end{figure}   

\vspace{-0.2cm}
 \subsection{\ac{gfdi} attack analysis} \label{sec; sim gfdi analysis}
 Figure \ref{fig; tau}.(a) presents the graph \ac{tv} of the
attacks as a function of the attack impact $\tlvert \cvec \trvert_{\infty}=\tau$ (see \eqref{eq; TV c}) for $k=5$  and $\Scal=\emptyset$
over the IEEE-$57$ bus test case. 
It can be seen that the \texttt{GFDI} attack  ($\hat{\cvec}$ from Algorithm \ref{alg; gfdi})  has the lowest graph \ac{tv}, 
while the other attacks generally demonstrate a rapid increase in the graph \ac{tv} as $\tau$ increases. 
These results can be interpreted as an indication of the vulnerability of the different attacks to the GSP-based detectors in the form of \eqref{eq; detection GFT2}. 
For this scenario, there is only one nonzero element in Attacks \ref{Attack; sparsest lowest}-\ref{Attack; sparsest average}. Thus, \ref{Attack; sparsest lowest} is the attack with the lowest \ac{tv} from all \ac{fdi} attacks limited to manipulate only one state variable. 
It can be seen that enabling the adversary to manipulate more than one element, as performed in the \texttt{GFDI} 
attack, results in better detection performance.

Figure \ref{fig; tau}.(b) presents the probability of detection as a function of the attack impact, $\tlvert \cvec \trvert_{\infty}=\tau$, for $k=5$  and $\Scal=\emptyset$ over the IEEE-$57$ bus test case. 
The detection is conducted by
the \texttt{GTV-GHPF} detector, which is obtained by applying \eqref{eq; detection GFT}
with the \ac{ghpf} in \eqref{eq; GHPF TV} and with a false alarm probability of $0.05\%$.
The results show that the probability of detection increases as $\tau$ increases, as expected.
Comparing the results in Fig. \ref{fig; tau}.(b) with those in  Fig. \ref{fig; tau}.(a) confirms the assumption that smooth attacks are harder to detect by \ac{gsp}-based detectors. 
As a result, the \ac{gfdi} ($\hat{\avec}$ from Algorithm \ref{alg; gfdi}) is the attack that is the hardest to detect.  
The \texttt{sparse-low} attack that is suggested in this paper takes the attack with the lowest graph TV from all the possible results provided by solving (14) in \cite{kim2011strategic}, and provides the closes probability of detection to the proposed \ac{gfdi} attack. 
However, it is important to note that the authors of \cite{kim2011strategic}, who introduced the sparsest attack, did not discuss the influence of graph \ac{tv} on detection. Thus, selecting the attack with the lowest graph \ac{tv},  i.e., the \texttt{sparse-low} attack,  can be seen as an additional contribution of this paper.  
Compared to the \texttt{GFDI} attack, the random \ac{fdi} attacks and the \texttt{sparse-avg} attack are easily detected.

  \begin{figure}[t!]
\centering
\begin{subfigure}[b]{0.5\textwidth}
\centering
 \includegraphics[width=5.5cm]{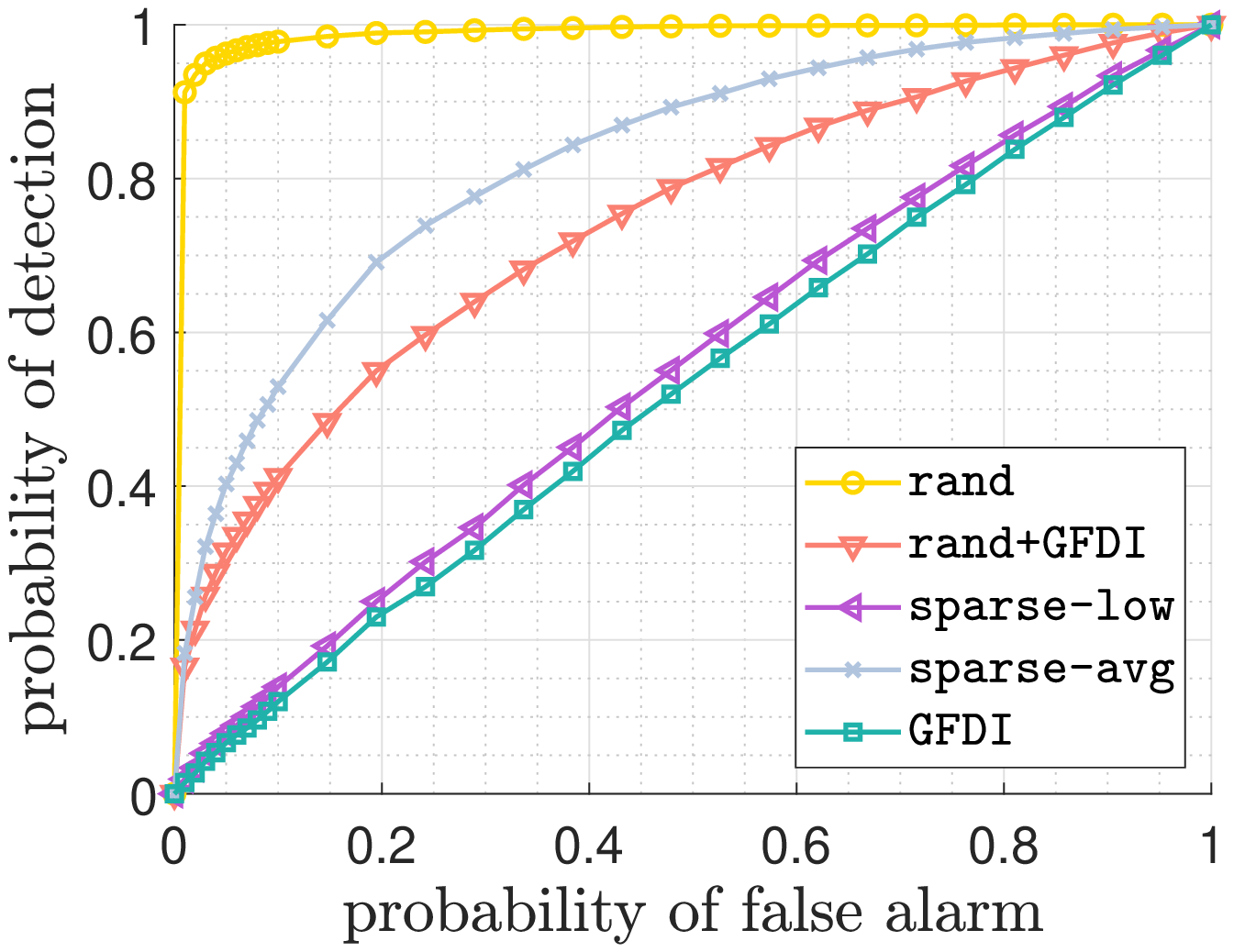}
 \caption{ \ac{dc} model}
\end{subfigure}
\hfill 
\begin{subfigure}[b]{0.5\textwidth}
\centering
  \includegraphics[width=5.5cm]{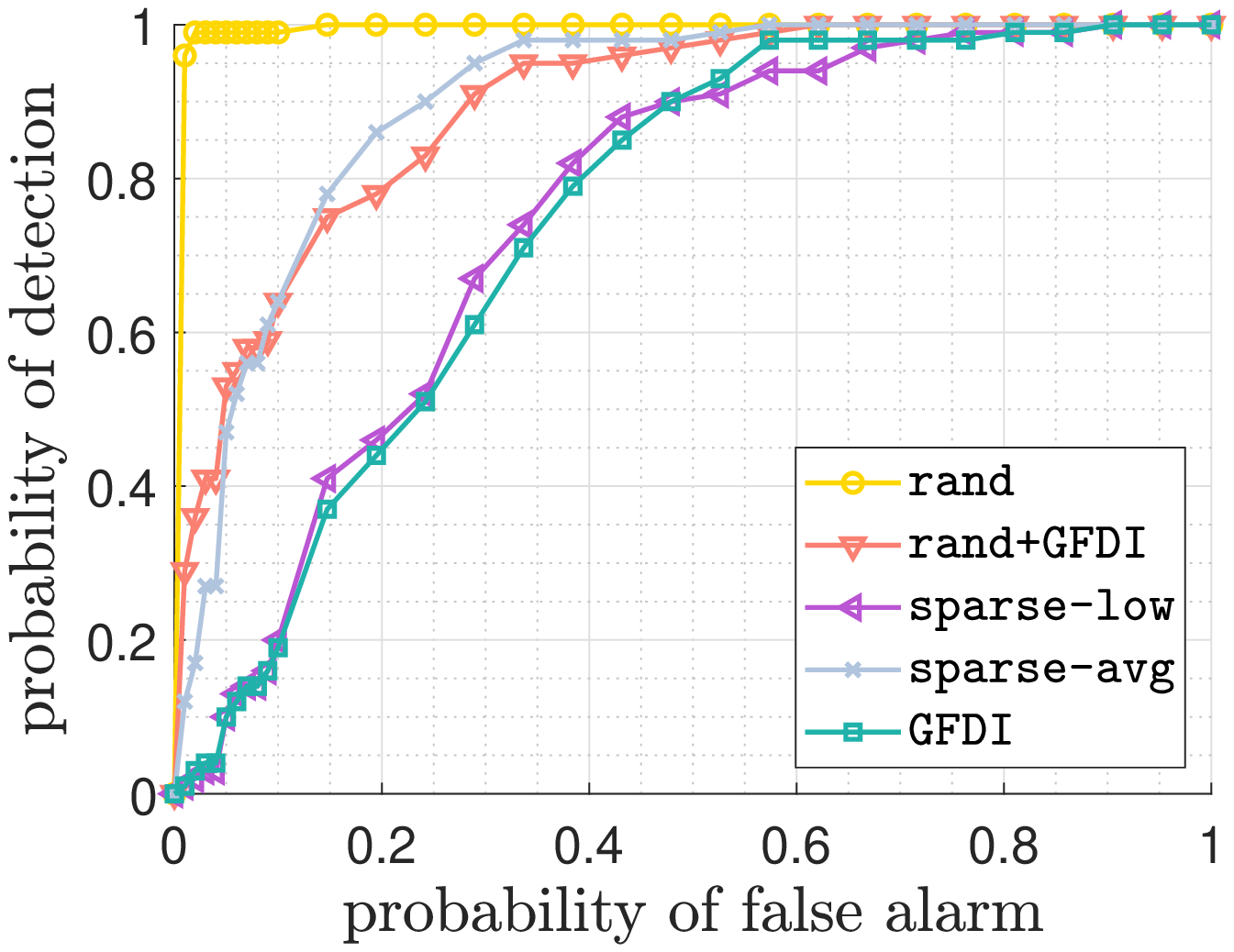}
 \caption{\ac{ac} model}
\end{subfigure}
\caption{The \ac{gfdi} attack is compared to Attacks \ref{Attack; random}-\ref{Attack; sparsest average}. In (a) and (b), the \ac{roc} curves of the attacks are generated w.r.t. to the \texttt{GTV-GHPF} detector for the DC model and the AC model, respectively.} 
\label{fig; roc}
\end{figure}

Figure \ref{fig; roc}.(a) presents the \ac{roc} curves (probability of detection versus probability of false alarm) of the \texttt{GTV-GHPF} detector
for the different unobservable \ac{fdi} attacks under the \ac{dc} simulation setup
with $\tau=0.2$, $k=5$, and $\mathcal{S}=\emptyset$ over the IEEE-$57$ bus test case.
It can be seen that the \texttt{GFDI} attack  ($\hat{\avec}$ from Algorithm \ref{alg; gfdi})  is the hardest to detect, 
while it performs slightly better than the \texttt{sparse-low} attack
Figure. \ref{fig; roc}.(b) presents the \ac{roc} curves of the \texttt{GTV-GHPF} detector
for the different unobservable \ac{fdi} attacks under the \ac{ac} simulation setup with $\tau=0.07$ and $k=5$ over the IEEE-$30$ bus test case.
It can be seen that the relationship between attack designs in 
Fig. \ref{fig; roc}.(b), where attacks with low graph \ac{tv} are less likely to be detected by the GSP-based detectors, is similar to those in Fig. \ref{fig; roc}.(a).
An extension of the \texttt{GTV-GFHP} to the \ac{ac} is provided as follows.
\begin{itemize}
\item \texttt{GTV-GHPF}: 
Based on our results in \cite{dabush2023state,Dabush_SAM_conf}, 
an alert to an attack is provided if the GSP-based detector in 
\eqref{eq; detection GFT2} when using the GTV-GHPF in \eqref{eq; GHPF TV} indicates an attack for at least one of the following cases:
1) $\Lmat=\Bmat$, $\yvec=\Re\{\hbt\}$, where $\hbt$ are the phases of $\hat{\vvec}$; and/or
2) $\Lmat=\Bmat$, $\yvec=\lvert\hat \vvec\rvert -\boldsymbol{1} \lvert \hat{v}_1\rvert$. 
\end{itemize}

Similar detection results to those presented in Figs. \ref{fig; tau}.(b),
\ref{fig; roc}.(a), and \ref{fig; roc}.(b)
were found when using the \texttt{Ideal-GHPF} 
detector, obtained by substituting  \eqref{eq; GHPF ideal} in \eqref{eq; detection GFT2}. 
In addition, we verified that the \texttt{BDD} detector cannot detect any of the attacks, since they are all unobservable.
Extensions of the \texttt{Ideal-GFHP} and \texttt{BDD} detectors for the \ac{ac} model can be obtained as follows.
\begin{itemize}
\item \texttt{Ideal-GHPF}: Attack alert is provided if the GSP-based detector in 
\eqref{eq; detection GFT2} when using the ideal GHPF in \eqref{eq; GHPF ideal} indicates an attack for at least one of the following cases: 
1) $\Lmat=\Re\{\Ymat\}$, $\yvec=\Re\{\hat{\vvec}\}$; 
2) $\Lmat=\Re\{\Ymat\}$, $\yvec=\Im\{\hat{\vvec}\}$;
3) $\Lmat=-\Im\{\Ymat\}$, $\yvec=\Re\{\hat{\vvec}\}$; and/or 
4) $\Lmat=-\Im\{\Ymat\}$, $\yvec=\Im\{\hat{\vvec}\}$.
The vector $\hat{\vvec}$ is the \ac{ac} \ac{psse} output. 
\item \texttt{BDD}:
The \texttt{BDD} detector for the \ac{ac} model ${\tlvert \zvec_{AC}-\hat{\vvec}(\Yvec \hat{\vvec})^*\trvert_2^2\gtrless_{\Hnull}^{\Ha}}$, 
where $\hat{\vvec}$ is the \ac{ac} \ac{psse} output. 
\end{itemize}

\begin{figure}[t!]
\centering
\begin{subfigure}[b]{0.5\textwidth}
 \centering
 \includegraphics[width=7cm]{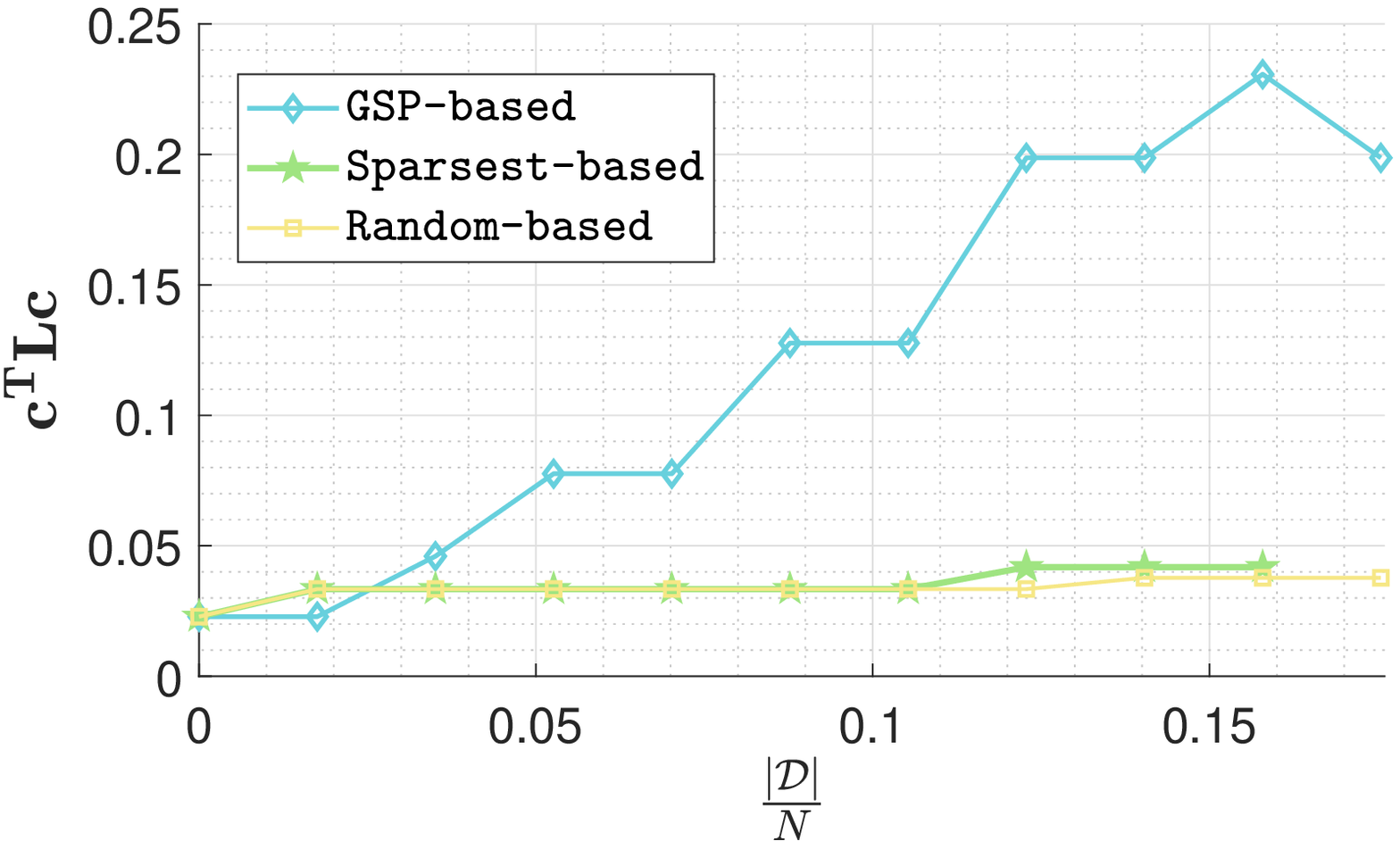}
 \caption{Protection policy analysis (IEEE-57)}
\end{subfigure}
\hfill 
\begin{subfigure}[b]{0.5\textwidth}
\centering
  \includegraphics[width=7.2cm]{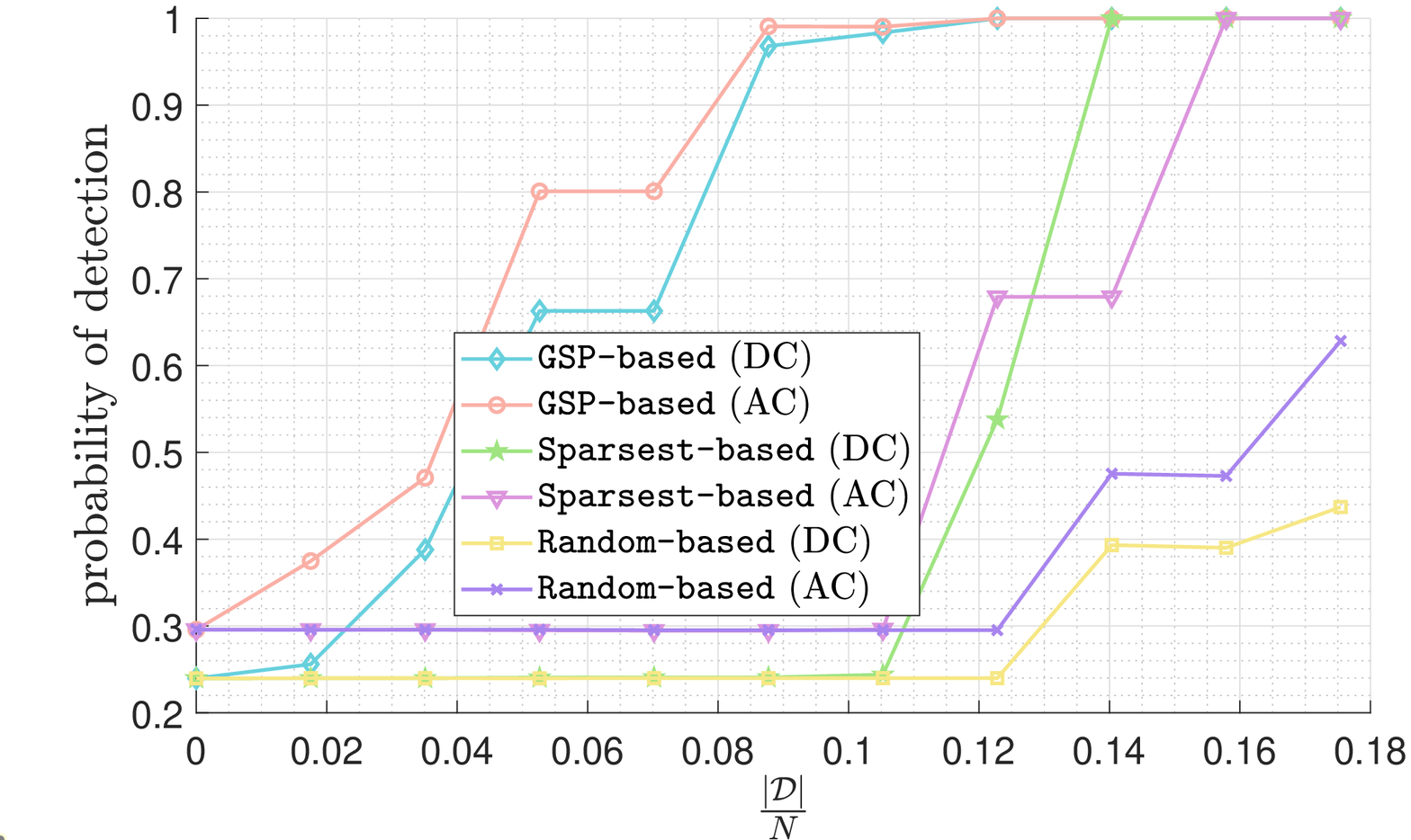}
 \caption{Protection impact on detection  (IEEE-57) }
\end{subfigure}
\begin{subfigure}[b]{0.5\textwidth}
\centering
  \includegraphics[width=7cm]{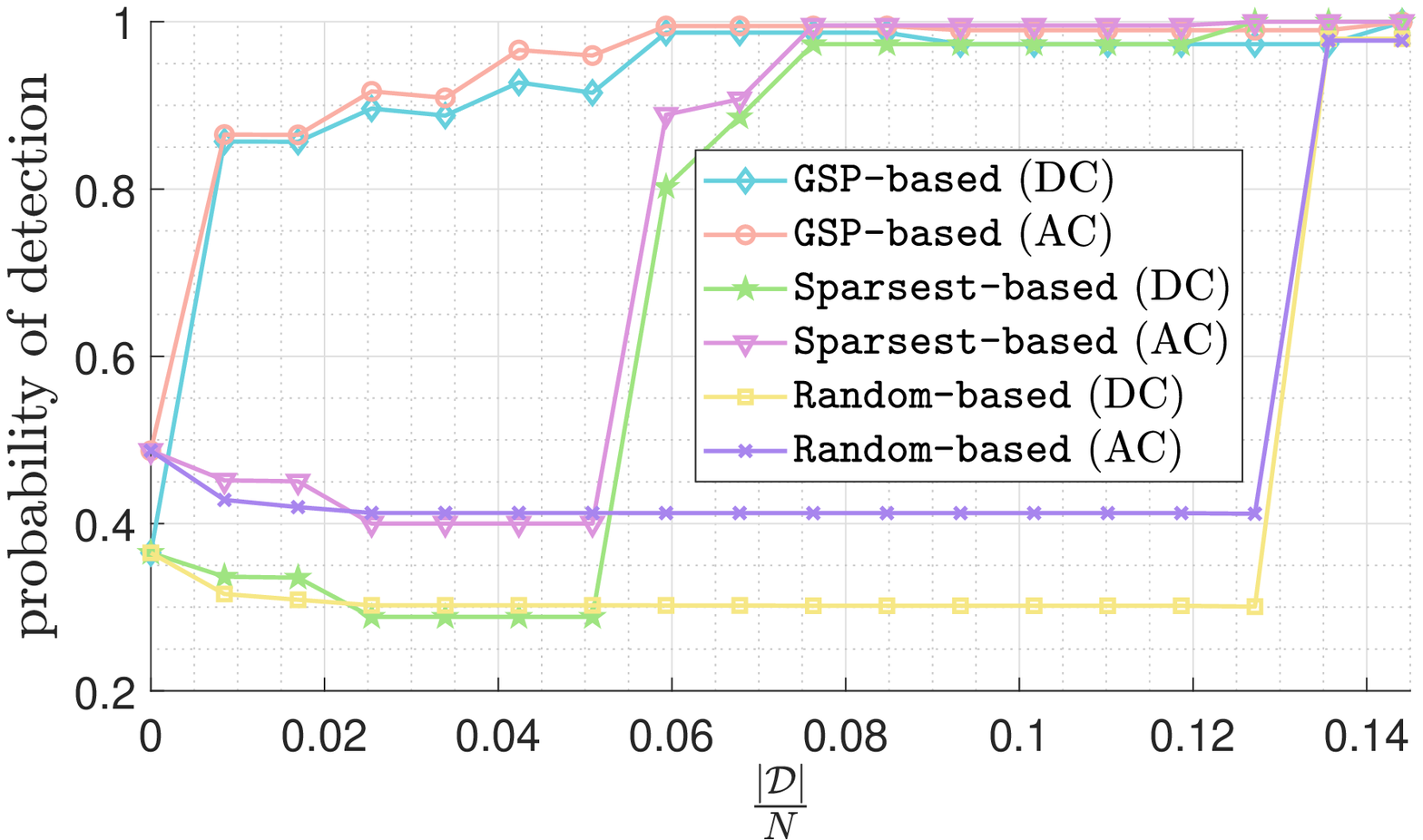}
 \caption{Protection impact on detection  (IEEE-118) }
\end{subfigure}
\caption{The \ac{gsp}-based protection policy 
is compared to Policies \ref{P; random}-\ref{P; sparsest}. 
 The IEEE-$57$ bus test case is observed in (a) and (b), where 
in (a), the \texttt{GFDI} attack graph \ac{tv} 
and in (b), the probability of detecting the \texttt{GFDI} attack by the \texttt{GTV-detector}, 
are presented versus the ratio of the secured state variables
out of the total number of states, $\frac{\lvert\mathcal{D}\rvert}{N}$.
In (c), the case in (b) is examined over the IEEE-$118$ bus test case. }
\label{fig; protection}
\end{figure}    

\vspace{-0.4cm}
\subsection{GSP-based protection policy analysis} \label{sec; sim protection analysis}
Figure \ref{fig; protection} illustrates the influence of 
securing system state variables, according to the different protection policies, on the 
\texttt{GFDI} attack.
 In Fig. \ref{fig; protection}.(a) the graph \ac{tv} of the \texttt{GFDI} attack  ($\hat{\cvec}$ from Algorithm \ref{alg; gfdi})  is presented versus the ratio between the secured state variables
 and the total number of states for the IEEE-$57$ bus test case
 with $\tau=0.6$ and $k=3$. 
The results indicate 
that using a \ac{gsp}-based policy leads to an immediate increase in graph \ac{tv} as the number of protected state variables increases.
In comparison, Policies \ref{P; random}-\ref{P; sparsest} have only a minor effect on the graph \ac{tv}.
In addition, it is important to mention that when at least $0.175\%$ of the state variables are secured, Policy \ref{P; sparsest} prevents the generation of the smooth attack.
Figure \ref{fig; protection}.(b) examines the probability of detecting the \texttt{GFDI}
attack by the \texttt{GTV-GHPF} detector,  with a false alarm probability of $0.05\%$,
for the IEEE-$57$ bus test case under both the \ac{dc} and \ac{ac} models. 
For the simulations conducted on the \ac{dc} and \ac{ac} models, we used $(\tau,k)=(0.6,3)$ and $(\tau,k)=(0.25,3)$, respectively.
It can be seen that protecting a low percentage of state variables according to the 
locations provided by a \ac{gsp}-based protection policy can significantly enhance the detection probability.
This is also true in Fig. \ref{fig; protection}.(c), which 
examines the probability of detecting the \texttt{GFDI}
attack by the \texttt{GTV-GHPF} detector,  with a false alarm probability of $0.05\%$,
for the IEEE-$118$ bus test case under both the \ac{dc} and \ac{ac} models. 
For the simulations conducted under the DC and AC models, we used $(\tau,k)=(0.55,5)$ and $(\tau,k)=(0.28,5)$, respectively.
It can be seen in Figs. \ref{fig; protection}.(b)  and \ref{fig; protection}.(c)  that protecting state variables according to the locations provided by a sparse-based protection policy also enhances the detection probability. 
However, this policy necessitates securing a significantly higher proportion of state variables. Moreover, it should be noted that the sparsest-based policy was created to prevent the possibility of an attack being generated, which requires, in general, securing at least $30\%$ of the state variables.
Finally, it can be seen that protecting state variables according to the locations provided by a random-based protection policy does not promote enhanced detection probability 
unless a large portion of state variables is secured. 
Moreover, similar behavior is witnessed for both the \ac{dc} and \ac{ac} models, where for both models, the proposed protection scheme 
significantly enhances the detection probability by protecting only a small percentage of state variables.

 \section{Conclusions} \label{sec; conclusions}
We introduce a new, \ac{gsp}-based, defensive approach against \ac{fdi} attacks in power systems. First, from the adversary’s point of view,  we present a new unobservable \ac{fdi} attack, the \ac{gfdi} attack, that utilizes the graph properties of the states to bypass the recently developed \ac{gsp}-based detectors.
Then, from the perspective of the system operator, we present countermeasures against the \ac{gfdi} attacks. The proposed protection scheme aims to select a minimal set of sensors to prevent the success of \ac{gfdi} attacks by forcing the attack to have a high graph TV and, thus, enabling its detection by advanced GSP tools.
This approach requires a smaller set of secured states than existing designs, which translates to a lower construction cost and a shorter installation time for new hardware. 
The proposed \ac{gfdi} attack design and protection scheme are applicable for both \ac{dc} and \ac{ac} models. Our numerical simulations show that existing detection methods have a significantly lower detection probability for the proposed \ac{gfdi} attack compared to previous attack designs, indicating the significant threat posed by the \ac{gfdi} attack to power systems. Moreover, the simulations demonstrate 
 that our proposed \ac{gsp}-based protection scheme requires a smaller set of secured sensors
 compared to existing designs,
  resulting in lower construction costs and shorter installation times. 
   Therefore, our approach provides a new and cost-effective solution for enhancing the resilience of power systems against FDI attacks.
Future research should include a practical investigation of the \ac{gsp}-based detectors, attacks, and protection schemes under various real-world settings.
 Other research directions involve extending the \ac{gfdi} attack design and the proposed protection scheme to partially observable systems, as well as optimizing the protection scheme.
In addition, the proposed approach, including the attack and protection design, could be extended to general sensor networks that are based on \ac{gsp} data tasks.
\begin{appendices}
\renewcommand{\thesectiondis}[2]{\Alph{section}:}

\vspace{-0.2cm}
\section{Proof for Theorem \ref{thm; gfdi cases}} \label{App; cases}

Let $\cvec^*$ and $\cvec^{**}$ be the optimal solutions of \eqref{eq; gfdi start} and \eqref{eq; gfdi cases}, respectively.
In the following, we  first show that  $(\cvec^{**})^T\Lmat\cvec^{**} \le (\cvec^{*})^T\Lmat\cvec^{*}$ 
and then that $\cvec^{**}$ is in the feasible set of \eqref{eq; gfdi start}. 
If both requirements are met, then $\cvec^{**}$ is a feasible solution in the minimization problem \eqref{eq; gfdi start} with a cost function smaller than the cost function 
associated with the optimal solution, $\cvec^*$.
Therefore, $\cvec^{**}$ is the optimal solution of \eqref{eq; gfdi start}, i.e., $\cvec^*=\cvec^{**}$. 

Without loss of generality, let the index $j$ be such that  $\tlvert\cvec^*\trvert_{\infty}=\lvert c^*_j \rvert $ and define $\bar{\cvec}^*={\text{sign}}(c^*_j)\frac{\lvert c_j^* \rvert}{\tau}\cvec^*$, 
 where ${\text{sign}}(\cdot)$ denotes the sign function, which assigns $1$ to positive argument and $-1$ for negative ones. 
As a result, we obtain that
\be \label{eq; tv appendix}
(\bar{\cvec}^*)^T\Lmat\bar{\cvec}^*=(\tau/\lvert c_j^*\rvert)^2 (\cvec^*)^T\Lmat\cvec^*\le(\cvec^*)^T\Lmat\cvec^*,
\ee
where $\tlvert\cvec^*\trvert_{\infty}=\lvert c^*_j \rvert \ge\tau$.
In addition, it can be observed that the definition of $\bar{\cvec}^*$ ensures  $\bar{c_j}^{*}=\tau$. 
Moreover, because $\cvec^*$ is a feasible solution in \eqref{eq; gfdi start}, 
and thus, satisfies the last constraint in \eqref{eq; gfdi start},
we obtain that
$$
\Hmat^{\mathcal{S}}\bar{\cvec}^*={\text{sign}}(c^*_j)\frac{\lvert c_j^* \rvert}{\tau}\Hmat^{\mathcal{S}} \cvec^*=\zerovec,
$$
and 
$$
\|\bar{\cvec}^*\|_0= \|{\text{sign}}(c^*_j)\frac{\lvert c_j^* \rvert}{\tau}\cvec^*\|_0 =\|\cvec^*\|_0\le k.
$$
Thus, $\bar{\cvec}^*$ is also a feasible solution for the inner minimization of \eqref{eq; gfdi cases}
for the case $i=j$.
 Consequently, $\bar{\cvec}^*$  is a feasible solution in \eqref{eq; gfdi cases}. 
As a result, its cost function is ensured to be higher than or equal to the cost function of the optimal result in 
\eqref{eq; gfdi cases}, i.e., $(\cvec^{**})^T\Lmat\cvec^{**} \le (\bar{\cvec}^*)^T\Lmat\bar{\cvec}^*$. 
Hence, from \eqref{eq; tv appendix} we obtain $(\cvec^{**})^T\Lmat\cvec^{**} \le (\cvec^*)^T\Lmat\cvec^*$.
Now, let $i$ be the case that minimizes the outer minimization in \eqref{eq; gfdi cases}, 
then the optimal solution of \eqref{eq; gfdi cases}, $\cvec^{**}$, satisfies $\lvert c^{**}_{i} \rvert =\tau$, $\tlvert \cvec^{**}\trvert_0\le k$, 
and $\Hmat^{\mathcal{S}}\cvec^{**}=\zerovec$. Hence, $\cvec^{**}$ is a feasible solution for \eqref{eq; gfdi start}.

\end{appendices}

\vspace{-0.3cm}
\section*{Acknowledgments}
  This work was supported in part by the Next Generation Internet (NGI) program, the Jabotinsky Scholarship from the Israel Ministry of Technology and Science, the Israel Ministry of National Infrastructure, Energy, National Research Foundation of Korea (NRF) grant funded by the Korean government (MSIT) (No. RS-2023-00210018), NSF grants CNS-2148128, EPCN-2144634, EPCN-2231350, and by the U.S. Department of Energy’s Office of Energy Efficiency and Renewable Energy under the Solar Energy Technology Office Award Number DE-EE0008769. The views expressed herein do not necessarily represent the views of the U.S. Department of Energy or the United States Government.
 The authors would like to thank Prof. Eran Treister from the Department of Computer Science at Ben-Gurion University for making a valuable contribution to the implementation of the GFDI attack.
The authors thank the anonymous reviewers for their constructive comments, which helped to improve the quality of the paper and the clarity of the proposed attack strategy.

\vspace{-0.3cm}

\bibliographystyle{IEEEtran}
\bibliography{FDI_GT}

\vspace{-0.3cm}
\vskip -2\baselineskip plus -1fil
\begin{IEEEbiographynophoto}
{Gal Morgenstern} 
received his B.Sc. (cum laude) and M.Sc. degrees in 2019 and 2020, respectively, from Ben-Gurion University of the Negev, Israel, in Electrical and Computer Engineering. He is currently in his last year as a Ph.D. research student at the School of Electrical and Computer Engineering at Ben-Gurion University of the Negev, Beer-Sheva, Israel. 
He was awarded the Israeli Ministry of Science and Technology Zabutinsky scholarship in 2019, and the Kreitman Negev scholarship in 2019,
and has been selected to participate in the UN-funded Next Generation Internet (NGI) Explorers Program in 2021. 
His main research interests include 
statistical signal processing and graph signal processing, with applications to power system cyber security. 
\end{IEEEbiographynophoto} 

\vspace{-0.3cm}
\vskip -2\baselineskip plus -1fil
\begin{IEEEbiographynophoto}
{Jip Kim} is an assistant professor in the Department of Energy Engineering at Korea Institute of Energy Technology (KENTECH). Prior to joining KENTECH, Jip worked as a postdoctoral research scientist at the Electrical Engineering department at Columbia University from 2021 to 2022. He received the Ph.D. degree in Electrical Engineering from the Smart Energy Research Laboratory at New York University (NYU), the B.S and M.S degrees in Electrical Engineering from Yonsei University and Seoul National University. His research focuses on developing mathematical models and optimization algorithms to solve power system engineering and energy economics problems.
\end{IEEEbiographynophoto} 

\vspace{-0.3cm}
\vskip -2\baselineskip plus -1fil
\begin{IEEEbiographynophoto}
{James Anderson} is an assistant professor in the Department of Electrical Engineering at Columbia University, where he is also a member of the Data Science Institute. From 2016 to 2019 he was a senior postdoctoral scholar in the Department of Computing + Mathematical Sciences  at the California Institute of Technology. Prior to Caltech, he held a Junior Research Fellowship at St John's College at the University Oxford and was also affiliated with the Department of Engineering Science. He was awarded a DPhil (PhD) from Oxford in 2012 and the BSc and MSc degrees from the University of Reading in 2005 and 2006 respectively. His research interests include distributed, robust, and optimal control and large-scale optimization with applications to smart power grids.
\end{IEEEbiographynophoto} 

\vspace{-0.3cm}
\vskip -2\baselineskip plus -1fil
\begin{IEEEbiographynophoto}
{Gil Zussman}  received the Ph.D. degree in electrical engineering from the Technion in 2004 and was a postdoctoral associate at MIT in 2004–2007. He has been with Columbia University since 2007, where he is a Professor of Electrical Engineering and Computer Science (affiliated faculty). His research interests are in the area of networking, and in particular in the areas of wireless, mobile, and resilient networks. He is an IEEE Fellow, received the Fulbright Fellowship, two Marie Curie Fellowships, the DTRA Young Investigator Award, and the NSF CAREER Award. He is a co-recipient of 7 paper awards including the ACM SIGMETRICS’06 Best Paper Award, the 2011 IEEE Communications Society Award for Advances in Communication, and the ACM CoNEXT’16 Best Paper Award. He has been the TPC chair of IEEE INFOCOM’23, ACM MobiHoc’15, and IFIP Performance 2011, and is the Columbia PI of the NSF PAWR COSMOS testbed. 
\end{IEEEbiographynophoto} 

\vspace{-0.3cm}
\vskip -2\baselineskip plus -1fil
\begin{IEEEbiographynophoto}
{Tirza Routtenberg} received the B.Sc. degree (magna cum laude)  in Biomedical Engineering from the Technion Israel Institute of Technology,  Haifa, Israel, in 2005, and the M.Sc. (magna cum laude) and Ph.D. degrees in Electrical Engineering from Ben-Gurion University of the Negev, Beer-Sheva, Israel, in 2007 and 2012, respectively. She was a Postdoctoral fellow with the School of Electrical and Computer Engineering, Cornell University, in 2012-2014. Since October 2014, she is a faculty member at the School of Electrical and Computer Engineering, Ben-Gurion University of the Negev, Beer-Sheva, Israel. In 2022–2023, she is a William R. Kenan, Jr. Visiting Professor for Distinguished Teaching at Princeton University.  Her research interests include signal processing in smart grids, statistical signal processing, and graph signal processing. She is an associate editor of IEEE Transactions on Signal and Information Processing Over Networks and of IEEE Signal Processing Letters.
She is a co-recipient of 4 Best Student Paper Awards at ICASSP 2011,  CAMSAP 2013,  ICASSP 2017, and IEEE Workshop on SSP 2018. She was awarded the Negev scholarship in 2008, the Lev-Zion scholarship in 2010, the Marc Rich Foundation Prize in 2011, and the Toronto Prize for Excellence in Research in 2021.
\end{IEEEbiographynophoto}

\end{document}